 \theoremstyle{plain}
\newtheorem{theorem}{Theorem}[section]
\newtheorem{lemma}[theorem]{Lemma}
\newtheorem{proposition}[theorem]{Proposition}
\newtheorem{corollary}[theorem]{Corollary}
\newtheorem{definition}[theorem]{Definition}
\newtheorem{example}[theorem]{Example}
\newtheorem{remark}[theorem]{Remark}
\newcommand{\pr}{{\mathbb P}}
\newcommand{\vc}[1]{{\mathbf{#1}}}
\newcommand{\vd}[1]{{\boldsymbol{#1}}}
\newcommand{\suc}{{\rm suc}}
\newcommand{\II}{{\mathbb I}}
\newcommand{\sett}[1]{{\mathcal{#1}}}
\newcommand{\typv}[1]{{\mathcal{T}}_{#1}}
\newcommand{\typ}{\typv{\beta}}
\newcommand{\intyp}{{\mathcal{I}}_{\beta}}
\newcommand{\typsmb}{\typv{\epsilon}^{(i)}}
\newcommand{\curP}[1]{\Pi_{\vc{U}}(#1)}
\newcommand{\curPi}[1]{\Pi_{\vc{U}^{(i)}}(#1)}
\newcommand{\mat}[1]{\mathsf{#1}}
\newcommand{\bin}{{\rm Bin}}
\newcommand{\elnt}{L^{*}_{N,T}}
\newcommand{\elrt}{L^{*}_{R,T}}
\newcommand{\tends}{\rightarrow \infty}
\newcommand{\whu}{\widehat{\vc{U}}}
\newcommand{\prV}[1]{P_{\vc{#1}}}
\newcommand{\qrV}[1]{Q_{\vc{#1}}}
\newcommand{\prv}[2]{P_{\vc{#1}}\left( \vc{#2} \right)}
\newcommand{\qrv}[2]{Q_{\vc{#1}}\left( \vc{#2} \right)}
\newcommand{\qst}[1]{Q^*\left( \vc{#1} \right)}
\newcommand{\pmax}[1]{p_{\max}(\vc{#1})}
\newcommand{\vtheta}{\vd{\theta}}
\newcommand{\pcaus}{P_{\mat{X} \| \vc{Y}^-} \left( \sett{X} \| \vc{y}^- \right)}
\newcommand{\pcausgen}{P_{\vc{X} \| \vc{Y}}\left( \vc{x} \| \vc{y} \right)}
\begin{document}

\title{Strong converses for group testing in the finite blocklength regime}
\date{\today}
\author{Oliver Johnson\thanks{School of Mathematics, Univeristy of Bristol, University Walk, Bristol, BS8 1TW, UK. Email: {\tt maotj@bristol.ac.uk}}}
\maketitle

\begin{abstract} \noindent
We prove new strong converse results in  a variety of group testing settings, generalizing a result of Baldassini, Johnson and Aldridge. These results are proved by two distinct approaches,
corresponding to the non-adaptive and adaptive cases. In the non-adaptive case, we mimic the hypothesis testing 
argument introduced in the finite blocklength channel
coding
regime by Polyanskiy, Poor and Verd\'{u}. In the adaptive case, we combine a formulation based on directed information
 theory with
ideas of Kemperman, Kesten and Wolfowitz from the problem of channel coding with feedback.
In both cases, we prove results which are valid for finite sized problems, and imply capacity results in the asymptotic
 regime.
These results are illustrated graphically for a range of models.
\end{abstract}

\section{Introduction and group testing model}

The group testing problem was introduced by Dorfman \cite{dorfman} in the 1940s, and captures the idea of 
efficiently isolating
a small subset  $\sett{K}$ of defective items  in a larger set containing $N$ items.  
The models used vary slightly, but the fundamental setting is that we 
perform a sequence of tests, each defined by a testing pool of items. The outcome of the test depends on 
the number of defective items in the pool.
The most basic model, which we refer to as `standard noiseless group testing' is that the test outcome 
equals 1 if and only if the testing pool contains at least one defective item.  Given $T$ tests,
the group testing problem requires us to design test pools and estimation algorithms   to maximise  $\pr(\suc)$,
the success probability (probability of 
recovering the defective set exactly).

This paper focuses on converse results, giving upper bounds on the $\pr(\suc)$ that can be 
 achieved by any algorithm given $T$ tests. We generalize the following
 strong result proved by Baldassini, Johnson and Aldridge \cite[Theorem 3.1]{johnsonc10}:
\begin{theorem} \label{thm:bja}
 Suppose  the defective set $\sett{K}$ is chosen
uniformly from the $\binom{N}{K}$ possible sets of given  size $K$. For adaptive or non-adaptive  standard noiseless group testing:
\begin{equation} \label{eq:johnson}
 \pr(\suc) \leq \frac{2^T}{\binom{N}{K}}.
\end{equation}
\end{theorem}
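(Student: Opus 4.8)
The plan is to prove the bound by a direct counting (pigeonhole) argument, exploiting the fact that in the standard noiseless model each test outcome is a single bit. First I would reduce to the case of a deterministic algorithm: any randomized test design and estimator can be realized by drawing a random seed $\omega$ and then running a deterministic procedure, so it suffices to bound $\pr(\suc \mid \omega)$ uniformly in $\omega$ and then average over $\omega$. Fix one such $\omega$ and suppress it from the notation.

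With the algorithm now deterministic, observe that in the noiseless model the full outcome vector $\vc{y} = (y_1, \dots, y_T) \in \{0,1\}^T$ is a deterministic function $\vc{y} = \vc{y}(\sett{K})$ of the defective set alone. This holds even in the adaptive case: although the pool tested at step $t$ depends on $y_1, \dots, y_{t-1}$, once $\sett{K}$ is fixed the entire interaction unrolls deterministically, and crucially each $y_t$ is still binary, so the outcome vector lives in a space of size exactly $2^T$. The estimator then produces $\widehat{\sett{K}} = g(\vc{y})$ for some fixed map $g$ from $\{0,1\}^T$ into the collection of size-$K$ subsets.

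The key observation is that the event $\{\,g(\vc{y}(\sett{K})) = \sett{K}\,\}$ can hold for at most $2^T$ distinct defective sets: if it held for two sets $\sett{K}_1 \neq \sett{K}_2$ producing the same outcome vector, then $g$ applied to that vector would have to return both, a contradiction. Hence $\vc{y}$ restricted to the successfully recovered sets is injective into the $2^T$-element outcome space. Since $\sett{K}$ is uniform over the $\binom{N}{K}$ candidate sets,
\begin{equation}
 \pr(\suc \mid \omega) = \frac{\#\{\sett{K} : g(\vc{y}(\sett{K})) = \sett{K}\}}{\binom{N}{K}} \leq \frac{2^T}{\binom{N}{K}}.
\end{equation}
Averaging this inequality over the seed $\omega$ yields \eqref{eq:johnson}.

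I do not anticipate a serious obstacle, as the heart of the argument is pigeonhole. The one point requiring genuine care is the adaptive case, where one must confirm that adaptivity does not enlarge the outcome space beyond $2^T$; it cannot, because each test still returns a single bit no matter how its pool was selected in response to earlier outcomes. The other step worth stating cleanly is the reduction from randomized to deterministic protocols, so that the bound is seen to apply to the fully general adaptive strategies rather than only to a fixed design.
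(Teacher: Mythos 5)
Your argument is correct, and it is essentially the original counting proof from \cite{johnsonc10}; but it is \emph{not} the route this paper takes. Here the theorem is recovered as the uniform special case of Theorem \ref{thm:noiseless} (via $\curP{2^T} = 2^T/\binom{N}{K}$), which is in turn proved through Proposition \ref{prop:basic}: one fixes $\qrV{Y} \equiv 1/2^T$, introduces the typical set $\typ$ with $\beta = \ln 2$ so that the atypical term vanishes, and uses the causal conditional probability decomposition of Lemma \ref{lem:jointprob} together with the normalisation \eqref{eq:sumx} to handle adaptivity; the non-adaptive case is independently recovered from the hypothesis-testing bound of Theorem \ref{thm:ppvmain} in \eqref{eq:noiselessiid}. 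Your pigeonhole step --- at most $2^T$ defective sets can be mapped injectively into $\{0,1\}^T$ and then inverted by the decoder --- is exactly the combinatorial kernel that survives at the end of the paper's proof (the passage from \eqref{eq:usesumx} to \eqref{eq:countingad}, ``at most $2^T$ distinct values $\vc{u}^*(\vc{y})$''), and your reduction to deterministic protocols by conditioning on the seed plays the role that the factorisation $\pcaus$ and the sum \eqref{eq:sumx} play here. What your more elementary argument buys is brevity and self-containment for this one statement; what the paper's heavier machinery buys is that the same proof template immediately yields the non-uniform bound $\pr(\suc) \leq \curP{2^T}$, the stationary-ergodic Corollary \ref{cor:inforate}, and the noisy-channel bounds of Example \ref{ex:adaptivenoisy}, none of which are reachable by the deterministic-outcome counting argument since there $\vc{y}$ is no longer a function of $\sett{K}$. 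One small point worth making explicit in your write-up: after conditioning on the seed, the adaptive test matrix $\mat{X}$ is itself a deterministic function of $\vc{y}$, so a decoder that sees both $\vc{y}$ and $\mat{X}$ is still just a function of $\vc{y}$; this is needed for your map $g$ to be well defined on $\{0,1\}^T$.
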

We extend this result to a  variety of settings.
 We first discuss four dichotomies in the modelling of the group testing
problem; (i) combinatorial vs probabilistic  (ii) binary vs non-binary (iii) noisy vs noiseless (iv) adaptive vs
non-adaptive. There are a number of further variations beyond these, as described in an ever-increasing body of literature.

\begin{enumerate}
\item { \bf [Combinatorial vs Probabilistic]}
The first  categorisation concerns the way  the defective items are chosen. 
Combinatorial group testing (see for example \cite{johnson33,atia,johnsonc10,chan}) is the model from Theorem \ref{thm:bja}: we suppose there is a fixed number $K$ of defective items, and the defective set $\sett{K}$ is chosen
uniformly from the $\binom{N}{K}$ possible sets of this size. In probabilistic group testing (see for example \cite{johnsonc12,li5})
the $i$th item is defective independently with probability $p_i$ (with $p_i$ not necessarily identical). In fact, we put both
these models in a common setting:

\begin{definition} We write $\vc{U} \in \{ 0, 1 \}^N$ for the (random) defectivity vector, where component $U_i$ is the indicator 
of the event that the $i$th item is defective.
For any vector $\vc{u} \in \{0,1 \}^N$ write
$ \prv{U}{u}  = \pr( \vc{U} = \vc{u})$, and define entropy  
\begin{equation} \label{eq:entropydef} H(\vc{U}) = -\sum_{\vc{u} \in \{0,1 \}^N} \prv{U}{u} \log_2 \prv{U}{u}.\end{equation}
\end{definition}
\begin{example}
For the two models as described above:
\begin{enumerate}
\item For combinatorial group testing, since $\vc{U}$ is uniform over $\binom{N}{K}$ outcomes, the 
entropy $H(\vc{U}) = \log_2 \binom{N}{K}$.
\item For probabilistic group testing, the entropy $H(\vc{U}) = \sum_{i=1}^N h(p_i)$, where $h(t) := - t \log_2 t
- (1-t) \log_2 (1-t)$ is the binary entropy function. If $p_i \equiv p$ then $H(\vc{U}) = N h(p)$.
\end{enumerate}
\end{example}
However, Corollary \ref{cor:inforate} below shows that results resembling Theorem \ref{thm:bja} can be proved for general
sources satisfying the Shannon-McMillan-Breiman theorem. This includes settings where the defectivity vector is generated
by a stationary ergodic Markov chain, which is a natural model of a setting where nearest neighbours are susceptible to
infection.

\item {\bf [Binary vs Non-binary]}
The second variation comes  in terms of the set of outcomes $\sett{Y}$ that each test can produce.
 We
refer to $\sett{Y}$ as the alphabet,  since in this paper (as in \cite{johnson33,atia} and other papers) we 
consider an analogy between  group testing and the
channel coding problem. It is most standard to consider the binary case, where $\sett{Y} = \{ 0, 1 \}$, though other models are 
possible (see for example \cite[Section 6.3]{aldridge} for a detailed review). For brevity we will only consider the binary case in this paper (though our techniques will be valid in a more general setting). We write $\vc{Y} \in \sett{Y}^T = \{ 0,1 \}^T$ for the outcome of the group testing process. 

\item { \bf[Noisy vs Noiseless]}
The third difference concerns the way in which the outcome of each test is formed. To fix notation,
we perform a sequence of $T$ tests defined by test pools $\sett{X}_1, \ldots, \sett{X}_T$, where each $\sett{X}_t \subseteq \{ 1, 2, \ldots, N \}$.  We 
represent this by a binary test matrix 
$\sett{X} = (x_{it} : i = 1, \ldots, N \mbox{\;and\;} t = 1,\ldots, T)$,
where $x_{it} = 1$ if and only if item $i$ is included in the $t$th pool (a concatenation of column vectors 
 given by the indicator functions of the $T$ test pools). Since the test design may be random, we write $\mat{X}$ for a random
variable giving a test matrix of this form.

For a test matrix $\mat{X}$ and defectivity vector $\vc{u}$, a key object of interest is the vector $\vc{K} = \vc{u}^T \mat{X}$.
Here, the $t$th component of $\vc{K}$ is
$K_t = \sum_{i=1}^N x_{it} \II( \mbox{item $i \in \mat{X}$})$, the total number of defective items appearing in the $t$th test. 
Observe that $K_t$ is a deterministic function of $\vc{u}$ and $\mat{X}_t$ (and does not depend on any other
variables). It is useful to define $\vc{X}$ via $X_t  = \II(K_t \geq 1)$.

We assume that the group testing model
 is static, memoryless and satisfies
 the `Only Defects Matter' property introduced
by Aldridge \cite{aldridge2, aldridge}:
\begin{definition}[Only Defects Matter] \label{def:odm}
We assume that the $t$th test outcome $Y_t$ is a random function of $K_t$ (so $\vc{Y}$ is conditionally independent of 
$\vc{U}$ given $\vc{K}$, and $Y_t$ is conditionally independent of $(K_s)_{s \neq t}$ given $K_t$).
Further, for some fixed transition matrix $P$, we assume
\begin{equation} \label{eq:odm}
  \pr( Y_t = y|  K_t = k) = P(y | k), \mbox{\;\;\; for all $y,k,t$}. \end{equation}
\end{definition}
Note that Definition \ref{def:odm} includes the noiseless standard group testing case, where we simply take $\vc{Y} = \vc{X}$.
To  understand Definition \ref{def:odm}, we can
 consider the case where $\vc{X}$ is fed (symbol by symbol)
through a memoryless noisy channel,
 independent of the defectivity vector $\vc{U}$, and any randomness in the group testing design. In the notation of 
\eqref{eq:odm}, we assume that $P(1 |k )  \equiv P(1|1)$ for all $k \geq 1$; in the noiseless case we take
$P(1 | k) \equiv 1$ and $P(0 | 0) = 1$.
However, Definition \ref{def:odm} allows a 
wider range of noise models, including the dilution channel of Atia and Saligrama \cite{atia}, where we take 
$P(0|k) = (1-u)^k$ for some $u$.

For a fixed test matrix $\mat{X} = \sett{X}$, as in \cite{johnsonc10}, in the noiseless case the testing procedure naturally 
defines a mapping $\vtheta( \cdot, \sett{X}): \{0,1 \}^N \rightarrow \{ 0, 1 \}^T$.
That is, given defectivity vector $\vc{u} \in \{ 0,1 \}^N$, we write the vector function $\vtheta$ with components given by scalar
function $\theta$, in the form
\begin{equation} \label{eq:vecfn}
\vtheta(\vc{u}, \sett{X}) = \left( \theta(\vc{u}, \sett{X}_1), \theta(\vc{u}, \sett{X}_2), \ldots ,
, \theta(\vc{u}, \sett{X}_T) \right). \end{equation}
where $\theta( \vc{u}, \sett{X}_t) =
 \II(K_t \geq 1) = X_t$.
\item {\bf [Adaptive vs Non-adaptive]}
The final distinction  is whether we design the test matrix using an adaptive or a non-adaptive strategy.
In the non-adaptive case the entire test matrix $\mat{X} = \sett{X}$ needs to be chosen in advance of 
the tests. In contrast, in the adaptive case, the  $(t+1)$st test pool $\sett{X}_{t+1}$ is chosen based on a knowledge of previous 
test pools $\sett{X}_{1,t} := \left\{  \sett{X}_1, \ldots, \sett{X}_{t} \right\}$ and test outcomes $
\vc{Y}_{1,t} := \left\{ Y_1, \ldots, Y_{t} \right\}$.
We  can think (see \cite{aldridge})
that adaptive group testing corresponds to channel coding with feedback, and non-adaptive group
testing to coding with no feedback.
Clearly
(see \cite{johnsonc10}), 
we can do no worse in the adaptive setting than for non-adaptive group testing, but it remains an open and interesting question to determine precisely by how much adaptivity can improve performance.
 
We argue that
a key tool in understanding adaptive group testing is directed information theory. This was first introduced by Marko \cite{marko} in the 
1970s, with interest revived by the work of Massey \cite{massey} in the 1990s, and developed further by 
authors such as Kramer \cite{kramer4} and Schreiber \cite{schreiber}. In particular, as described by Massey \cite{massey},
many authors make an incorrect probabilistic formulation
of such simple objects as discrete memoryless channels with feedback. A correct formulation requires the use of  the causal conditional probability distribution as studied for example by Kramer \cite{kramer4}. We use the notation
of the review paper \cite[Equation (7)]{amblard}, that for sequences $\vc{x} = (x_1, \ldots, x_T)$ and 
$\vc{y} = (y_1, \ldots, y_T)$, and subsequences $\vc{y}_{1,t-1} = (y_1, \ldots, y_{t-1})$,
\begin{equation} \label{eq:causalcond}
\pcausgen := \prod_{t=1}^T \prv{X_t|X_{1,t-1}, Y_{1,t}}{x_t | x_{1,t-1}, y_{1,t}}.
\end{equation}
Note that for any fixed $\vc{y}$, the fact that it is formed as a product of probability distributions means that $\sum_{\vc{x}} \pcausgen = 1$.
Using this probability distribution implies the form of the
directed information of Marko \cite{marko} (see also the later definition of transfer entropy by Schreiber \cite{schreiber}).

In Lemma \ref{lem:jointprob} below, assuming the Only Defects Matter property Definition \ref{def:odm},
we decompose the joint probability of $(\vc{U}, \sett{X}, \vc{Y})$ in the 
general adaptive setting, using  the term  $\pcaus$, which is defined in
\eqref{eq:jointprob2}. Here we use the causal conditional probability  notation \eqref{eq:causalcond} above,
with superscript $\vc{y}^{-}$ referring to the fact that there is a lag in the index of $\vc{y}$
in \eqref{eq:jointprob2} compared with \eqref{eq:causalcond} (we choose the set $\sett{X}_i$ based on a knowledge of the
previous sets $\sett{X}_{1,t-1}$ and test outcomes $\vc{y}_{1,t-1}$). 

The decomposition in Lemma \ref{lem:jointprob} shows that  the $t$th output symbol $Y_t$ is conditionally
independent of $\vc{U}$, given values $K_{1,t}$ and previous outputs $\vc{Y}_{1,t-1}$. This is precisely the
definition of a causal system between $\vc{K}$ and $\vc{Y}$
 given by Massey in \cite[Equation (8)]{massey}, under which condition the capacity of a discrete
memoryless channel is not increased by feedback. 
\end{enumerate}
Regardless of these variations, we always make an estimate $\vc{Z} = \whu$, based only on a knowledge of
outputs $\vc{Y} = \vc{y}$ and  test matrix $\mat{X} = \sett{X}$, using
 a probabilistic estimator (decoder) that gives $\vc{Z} = \vc{z}$ with
probability  \begin{equation} \label{eq:decoder} \prv{Z|Y,\mat{X}}{z|y,\sett{X}}. \end{equation}

The main results of the paper are Theorem \ref{thm:ppvmain}, which gives an upper bound on $\pr(\suc)$ in the non-adaptive 
case, and Proposition \ref{prop:basic}, which gives the corresponding result in the adaptive case.
The strength of these results is illustrated in results such as 
Examples \ref{ex:bsc} and \ref{ex:adaptivenoisy}, where we calculate
bounds on the success probability in the case where $\vc{X}$ forms the input and $\vc{Y}$ the output of a binary symmetric
channel with error probability $p$.
 We illustrate these bounds in Figure \ref{fig:bscadaptive}, where the upper bounds on $\pr(\suc)$ in both adaptive and
non-adaptive cases are plotted in the finite blocklength case of combinatorial group testing with $N=500$, $K = 10$, $p=0.11$.

The structure of the paper is as follows.
 In Section \ref{sec:existing} we review existing results concerning 
group testing converses.
In Section \ref{sec:hyptest} we use an argument based on the paper by Polyanskiy, Poor and Verd\'{u} \cite{polyanskiy2} to
prove Theorem \ref{thm:ppvmain}, giving
 a strong converse for non-adaptive group testing, and discuss the bounds for the  binary symmetric channel
 case. In Section \ref{sec:adaptive} we discuss the adaptive case, by extending arguments first given
for channels with feedback by Kemperman \cite{kemperman}, Kesten \cite{kesten}
 and Wolfowitz \cite{wolfowitz3} (see also Gallager \cite{gallager2}).
We prove a bound (Proposition \ref{prop:basic}) 
which specializes 
 in the noiseless case to give a result (Theorem \ref{thm:noiseless}) which generalizes
Theorem \ref{thm:bja}. We consider examples of this noiseless result in the  probabilistic case in Section \ref{sec:noiselessexamples}.
 Finally in Section \ref{sec:adaptivenoisy} we apply  Proposition \ref{prop:basic} in the noisy adaptive case.
The proofs of the main theorems  are given in Appendices.

While this paper only considers group testing, we remark that this problem lies in the area of sparse inference,
which includes problems such as compressed sensing and matrix completion, 
and it is likely that results proved here will extend to  more
general settings. The paper \cite{aksoylar} gives a review of links between group testing and other sparse inference problems.
Group testing itself has a number of applications, including cognitive radios \cite{aldridge2,atia2,johnsonc12}, network
tomography \cite{cheraghchi} and efficient
gene sequencing \cite{erlich,shental}. The bounds proved here should provide fundamental performance limits in these
contexts.
\section{Existing converse results} \label{sec:existing}
 It is
clear from information-theoretic considerations that to find all the defectives in the noiseless case  will require  at least
 $T^* = H(\vc{U})$ (the ``magic number'')
tests.  In the language of channel coding, the focus of this paper is on converse results; that is
 given $O(T^*)$ tests, we give strong upper bounds  on the success probability
$\pr(\suc)$ of any possible algorithm.  

There has been considerable work
on the achievability part of the problem, in developing group testing algorithms and proving performance guarantees.
 Early work on group testing considered algorithms which could be proved to be order optimal (see for example the analysis of \cite{balding,du,dyachkov3}), often using combinatorial properties such
as separability or disjunctness. 
More recently there has been interest
(see for example \cite{johnson33,aldridge3,atia,chan,li5,malyutov,malyutov4,scarlett,wadayama,wadayama2})
in finding the best possible constant, that is to find algorithms which succeed 
with high probability using $T = c T^* = c H(\vc{U})$ tests, for $c$ as small as possible.
In this context, the paper \cite{johnsonc10} 
defined the capacity of combinatorial group testing problems, a definition  extended to both combinatorial and
probabilistic group testing in \cite{johnsonc12}. We state this definition for both weak and strong capacity in the sense of Wolfowitz: 
\begin{definition} \label{def:capacity} Consider a sequence of group testing problems where the $i$th problem
has defectivity vector  $\vc{U}^{(i)}$, and consider algorithms which are given $T(i)$ tests. We think of $H(\vc{U}^{(i)})/T(i)$ (the number of bits of information learned per test) as the rate of the algorithm and refer to a constant $C$ as the weak group testing capacity if for any $\epsilon > 0$:
  \begin{enumerate}
    \item any sequence of algorithms with
      \begin{equation} \label{eq:lower}
        \liminf_{i \tends} \frac{ H(\vc{U}^{(i)}) }{T(i)} \geq C+ \epsilon,
      \end{equation}
      has success probability $\pr(\suc)$ bounded away from 1,
    \item and there exists a sequence of algorithms with
      \begin{equation} \label{eq:upper}
        \liminf_{i \tends} \frac{H(\vc{U}^{(i)}) }{T(i)}  \geq C - \epsilon
      \end{equation}
      with success probability $\pr(\suc) \rightarrow 1$.
  \end{enumerate}
We call $C$  the strong capacity if $\pr(\suc) \rightarrow 0$ for any sequence of algorithms satisfying \eqref{eq:lower}.
\end{definition}
For example, in \cite{johnsonc10} we prove that noiseless adaptive combinatorial group testing has strong capacity 1. This result is proved by combining Hwang's Generalized
Binary Splitting Algorithm \cite{hwang} (which is essentially optimal -- see also \cite{johnsonc10,du} for a discussion of this)
with the converse result Theorem \ref{thm:bja}. However, even in the noiseless non-adaptive case
 the optimal algorithm remains unclear,  although some results are known in some
regimes, under assumptions about the distribution of $\mat{X}$
 (see for example \cite{aldridge3,johnson33,scarlett,wadayama}). 

In general,
capacity results are asymptotic in character, whereas we will consider the finite blocklength regime (in the spirit of \cite{polyanskiy2})
and prove bounds on $\pr(\suc)$ for any size of 
problem.
We briefly review existing converse results. First, we mention that results (often referred to as folklore) can be proved using
arguments based on Fano's inequality. 
\begin{lemma}  Using $T$ tests:
\begin{enumerate}
\item For combinatorial group testing Chan et al. \cite[Theorem 1]{chan} give 
\begin{equation} \label{eq:chan}
 \pr(\suc) \leq \frac{T}{\log_2 \binom{N}{K}}.
\end{equation} 
\item For probabilistic group testing Li et al. \cite[Theorem 1]{li5} give
\begin{equation} \label{eq:li}
 \pr(\suc) \leq \frac{T}{N h(p)}.
\end{equation} 
\end{enumerate}
\end{lemma}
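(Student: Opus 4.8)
The plan is to observe that both displayed inequalities are instances of the single folklore bound
\begin{equation}
\pr(\suc) \leq \frac{T}{H(\vc{U})},
\end{equation}
after substituting the entropy values recorded in the Example following \eqref{eq:entropydef}: for combinatorial testing $H(\vc{U}) = \log_2 \binom{N}{K}$, which yields \eqref{eq:chan}, and for identically distributed probabilistic testing $H(\vc{U}) = N h(p)$, which yields \eqref{eq:li}. It therefore suffices to establish this common inequality, and I would do so by the standard data-processing-plus-Fano converse familiar from channel coding.

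First I would set up the Markov chain $\vc{U} \to (\vc{Y}, \mat{X}) \to \vc{Z}$: the estimate $\vc{Z} = \whu$ is generated from $(\vc{Y}, \mat{X})$ alone through the decoder \eqref{eq:decoder}, so it is conditionally independent of $\vc{U}$ given $(\vc{Y}, \mat{X})$. The data-processing inequality then gives $I(\vc{U}; \vc{Z}) \leq I(\vc{U}; \vc{Y}, \mat{X})$. Since any (possibly randomised) test design is driven by exogenous randomness and, in the adaptive case, by earlier outputs, conditioning on $\vc{Y}$ leaves $\mat{X}$ independent of $\vc{U}$; hence $I(\vc{U}; \mat{X} \mid \vc{Y}) = 0$ and $I(\vc{U}; \vc{Y}, \mat{X}) = I(\vc{U}; \vc{Y})$. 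Because $\vc{Y}$ takes values in $\{0,1\}^T$ we have $I(\vc{U}; \vc{Y}) \leq H(\vc{Y}) \leq T$, and crucially this bottleneck of $T$ bits holds in both the adaptive and the non-adaptive settings. The remaining step is to lower bound $I(\vc{U}; \vc{Z}) = H(\vc{U}) - H(\vc{U} \mid \vc{Z})$ by controlling the equivocation $H(\vc{U} \mid \vc{Z})$ through Fano's inequality in terms of the error probability $1 - \pr(\suc)$, and then to rearrange.

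The main obstacle is not the structure of the argument but the bookkeeping needed to reach the clean numerator $T$. Fano's inequality introduces a binary-entropy remainder term together with a factor measuring the support of $\vc{U}$, and care is required so that these lower-order contributions do not spoil the stated constant; indeed for $T$ small the clean form is only an approximation, which is consistent with these results being quoted as folklore. In the combinatorial case $\vc{U}$ is uniform on $\binom{N}{K}$ outcomes and the support factor is exactly $H(\vc{U})$, so the estimate is immediate. In the probabilistic case the crude alphabet bound $\log_2(2^N-1) \approx N$ overshoots $H(\vc{U}) = N h(p)$, so I would instead exploit the product structure of $\vc{U}$, bounding the equivocation coordinate by coordinate so that the per-item penalty is $h(p)$ rather than $1$. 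This refinement is precisely the delicate point, and the comparative weakness of these Fano converses — they only force $\pr(\suc)$ bounded away from $1$ rather than decaying — is exactly what motivates the strong-converse techniques developed in the remainder of the paper, sharpening Theorem \ref{thm:bja} and its generalisations.
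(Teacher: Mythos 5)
First, note that the paper itself supplies no proof of this lemma: both inequalities are quoted directly from \cite[Theorem 1]{chan} and \cite[Theorem 1]{li5}, and the only proof content in the paper is the discussion immediately following, which records that those references argue via Fano's inequality through the decomposition $H(\vc{U}\mid \vc{Z}=\vc{z}) = H(E\mid \vc{Z}=\vc{z}) + \pr(E=1\mid \vc{Z}=\vc{z})\, H(\vc{U}\mid \vc{Z}=\vc{z}, E=1)$. Your reconstruction follows exactly that route, and in the combinatorial case it essentially closes: $I(\vc{U};\vc{Z}) \leq I(\vc{U};\vc{Y},\mat{X}) = I(\vc{U};\vc{Y}) \leq H(\vc{Y}) \leq T$ combined with Fano and the support bound $H(\vc{U}\mid \vc{Z}=\vc{z}, E=1) \leq \log_2\binom{N}{K} = H(\vc{U})$ yields $\pr(\suc) \leq (T+1)/\log_2\binom{N}{K}$; the residual $+1$ coming from $h(P_e)\leq 1$ is the bookkeeping issue you flag, and is harmless for the folklore/asymptotic reading of \eqref{eq:chan}.

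The genuine gap is in the probabilistic branch. Your proposed repair --- applying Fano coordinate by coordinate so that the per-item penalty is $h(p)$ rather than $1$ --- does not produce \eqref{eq:li}. Coordinatewise Fano gives $H(\vc{U}\mid \vc{Z}) \leq \sum_i H(U_i \mid Z_i) \leq \sum_i h\bigl(\pr(U_i \neq Z_i)\bigr)$, which controls the expected number of misclassified items (a bitwise error criterion), not the block error $\pr(\vc{U}\neq\vc{Z})$, and there is no direct passage from a bound on $\sum_i h(\pr(U_i\neq Z_i))$ to $\pr(\suc) \leq T/(Nh(p))$. On the other hand, the block-level Fano bound uses $H(\vc{U}\mid \vc{Z}, E=1) \leq \log_2(2^N-1)\approx N$, which only gives $P_e \geq (Nh(p)-T-1)/N$ --- much weaker than \eqref{eq:li} when $p$ is small, exactly as you observe. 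So the step you identify as ``precisely the delicate point'' is in fact the whole difficulty, and it is not resolved by the sketch; closing it requires the specific argument of \cite{li5} (a finer treatment of the conditional entropy given an error), or simply treating the lemma as the literature citation that the paper intends it to be.
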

In order to understand the relationship between \eqref{eq:chan} and Theorem \ref{thm:bja}; fix $\delta > 0$ and
use  $T = T^*(1-\delta)$ tests, in a regime where
$\log_2 \binom{N}{K} \rightarrow \infty$ and hence $T^* \rightarrow \infty$. Chan et al.'s result \eqref{eq:chan}
gives that $\pr(\suc) \leq (1-\delta)$, whereas \eqref{eq:johnson} implies $\pr(\suc) \leq 2^{-\delta T^*}$.  In the language of Definition \ref{def:capacity},
 Chan et al. \cite{chan} give a weak converse  whereas Baldassini, Johnson and Aldridge \cite{johnsonc10}
give a strong converse. In fact, \eqref{eq:johnson} shows that the success probability converges to zero exponentially fast.

To understand why Chan et al's result \eqref{eq:chan} is not as strong as Theorem \ref{thm:bja}, 
we examine the proof in \cite{chan}.
At the heart of it lies an argument based on  Fano's inequality,  bounding the entropy $H( \vc{U} | \vc{Z}) $
using the decomposition
\begin{eqnarray}
H( \vc{U} | \vc{Z} = \vc{z})  =  H( E | \vc{Z} = \vc{z})  +  \pr( E = 1 | \vc{Z} = \vc{z}) H( \vc{U} | \vc{Z} = \vc{z}, E = 1),
\end{eqnarray}
where $E$ is the indicator of the error event $\vc{U} \neq \vc{Z}$.  In \cite{chan}
 this last term is bounded by $\log_2 \binom{N}{K}$, since a priori
$\vc{U}$ could be any defective set other than $\vc{z}$. However, in practice, this is 
a significant overestimate. For example, in the noiseless case there is a relatively small collection 
of defective sets  that a 
particular defective set $\vc{U}$ can mistakenly be estimated as
(referred to as $A(\cdot)$  later in this paper).
In this case, using inferences such as those  in the {\texttt{DD}} algorithm of \cite{johnson33}, 
any item which appears in a test pool $\sett{X}_t$ giving result $Y_t = 0$ cannot be defective. Essentially, Theorem
\ref{thm:bja} exploits this type of fact.

However  results corresponding to Theorem \ref{thm:bja} were not previously
known even for noiseless probabilistic group testing, let alone
more general settings, including noisy channels and other models for defectivity.
Tan and Atia \cite[Theorem 2]{tan} do prove a strong converse for combinatorial group testing, however, they do not achieve exponential decay. Since the result of the test only depends on whether the items in the defective set $\sett{K}$ are present,
we can restrict our attention to the submatrix $\mat{X}_{\sett{L}}$ indexed by subsets $\sett{L} \subseteq \sett{K}$.
\begin{theorem}[\cite{tan}, Theorem 2]
Write $\zeta_T := T^{-1/4} \pr(\suc)^{-1/2}$ and $\eta_T := T^{-1} + h(T^{-1/4})$. If the components of 
$\mat{X}$ are independent and identically distributed then for each 
$\sett{L}$, then $T$  (the number of tests required to achieve the given probability of success) satisfies:
$$ T( I( \mat{X}_{\sett{K} \setminus \sett{L}}; \mat{X}_{\sett{L}}, \vc{Y}) + \eta_T) \geq (1 - \zeta_T) \log_2 \binom{ N - |\sett{L}|}{K  - |\sett{L}|}.$$
\end{theorem}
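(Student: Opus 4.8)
The plan is to read this as a strong converse for a ``channel coding'' subproblem extracted by a genie argument, and then to exploit the i.i.d.\ structure of $\mat{X}$ to control an information density by concentration, which is what produces the polynomial corrections $\zeta_T$ and $\eta_T$ (and also what prevents them from sharpening to the exponential decay of Theorem~\ref{thm:bja}).

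First I would set up the genie reduction. Fix $\sett{L} \subseteq \sett{K}$ and suppose a genie reveals both the membership of $\sett{L}$ and the corresponding test columns $\mat{X}_{\sett{L}}$. Since any decoder that recovers $\sett{K}$ exactly in particular recovers the residual set $\sett{V} := \sett{K} \setminus \sett{L}$, the success probability $\pr(\suc)$ is no larger than the probability of correctly identifying $\sett{V}$ from the pair $(\mat{X}_{\sett{L}}, \vc{Y})$. This is precisely a coding problem in which the ``message'' is the choice of the $K - |\sett{L}|$ items of $\sett{V}$ from the $N - |\sett{L}|$ items outside $\sett{L}$, so there are $M := \binom{N-|\sett{L}|}{K-|\sett{L}|}$ equiprobable messages; the ``codeword'' is the submatrix $\mat{X}_{\sett{V}}$, and the channel output is $(\mat{X}_{\sett{L}}, \vc{Y})$. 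Because $\vc{Y}$ depends on the defectivity only through the columns indexed by $\sett{K}$, the relevant information quantity is exactly the single-test $I(\mat{X}_{\sett{K} \setminus \sett{L}}; \mat{X}_{\sett{L}}, \vc{Y})$.

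Next I would apply an information-spectrum (Verd\'u--Han / Wolfowitz-type) converse to this subproblem. Writing the information density $\imath := \log_2 \bigl( \pr(\mat{X}_{\sett{V}} \mid \mat{X}_{\sett{L}}, \vc{Y}) / \pr(\mat{X}_{\sett{V}}) \bigr)$, the standard converse gives, for any slack $\gamma > 0$,
\[ \pr(\suc) \leq \pr\bigl( \imath \geq \log_2 M - \gamma \bigr) + 2^{-\gamma}, \]
so a lower bound on $T$ follows as soon as $\imath$ is shown to concentrate below $\log_2 M$. Here the i.i.d.\ hypothesis on $\mat{X}$ together with the memoryless ``Only Defects Matter'' structure is essential: the density $\imath$ is a sum of $T$ conditionally independent per-test contributions, each with mean equal to the single-test mutual information $I(\mat{X}_{\sett{K}\setminus\sett{L}}; \mat{X}_{\sett{L}}, \vc{Y})$ and bounded variance. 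A Chebyshev bound at deviation of order $T^{3/4}$ then controls $\pr\bigl(\imath \geq T I + O(T^{3/4})\bigr)$ at order $T^{-1/2}$, the per-test slack being $T^{3/4}/T = T^{-1/4}$. Choosing the slack $\gamma$, the Chebyshev deviation, and the measure of the atypical (restriction) event as suitable powers of $T$, and carrying the surviving $\pr(\suc)$ factor through a second-moment (Cauchy--Schwarz) step, assembles the error terms into precisely $\zeta_T = T^{-1/4}\pr(\suc)^{-1/2}$ and $\eta_T = T^{-1} + h(T^{-1/4})$, the binary-entropy correction arising from the entropy cost of restricting to the typical set of defective-column realisations.

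The main obstacle, I expect, is not the outline but the bookkeeping in the concentration step: one must verify that, after conditioning on the genie information $\mat{X}_{\sett{L}}$ and the outputs $\vc{Y}$, the per-test terms of $\imath$ genuinely behave like an independent sum with uniformly bounded variance, which requires care because $\vc{Y}$ is itself correlated with $\mat{X}_{\sett{V}}$ through the noisy channel of Definition~\ref{def:odm}. The delicate quantitative point is then to balance the three free parameters so that every error term collapses into exactly the stated $\zeta_T$ and $\eta_T$; this tuning, rather than any conceptual difficulty, is where the $T^{-1/4}$ exponent is pinned down, and it is also the reason the method yields only a polynomially-decaying (rather than exponentially-decaying) converse.
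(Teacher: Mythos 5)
First, note that the paper does not prove this statement at all: it is imported verbatim from Tan and Atia \cite[Theorem 2]{tan}, and the surrounding text explicitly records that their argument ``is based on Marton's blowing up lemma, mirroring the treatment of Shannon's theorem in the book of Csisz\'{a}r and K\"{o}rner.'' So you are not reconstructing the paper's proof but proposing an alternative one, and it is a genuinely different route: a genie reduction followed by a Verd\'{u}--Han/Wolfowitz information-spectrum converse with Chebyshev concentration, i.e.\ essentially the Polyanskiy--Poor--Verd\'{u} machinery that the paper reserves for its \emph{own} non-adaptive result (Theorem \ref{thm:ppvmain}). Your genie step is sound and matches the setup of \cite{tan} (conditioning on $\sett{L}$ and $\mat{X}_{\sett{L}}$ turns the problem into decoding one of $\binom{N-|\sett{L}|}{K-|\sett{L}|}$ equiprobable residual sets), and the i.i.d.\ hypothesis is indeed what makes the information density a sum of independent per-test terms.

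The genuine gap is the final assembly: you assert that tuning the slack parameters ``assembles the error terms into precisely $\zeta_T$ and $\eta_T$,'' but the proposed machinery does not produce terms of that shape. A meta-converse/Chebyshev argument yields bounds of the form $\pr(\suc) \leq \pr(\imath > \log_2 M - \gamma) + 2^{-\gamma} \leq V/(\delta^2 T) + 2^{-\gamma}$ whenever $\log_2 M \geq T(I+\delta) + \gamma$, i.e.\ additive corrections $\delta \sim \sqrt{V/T}\,\lambda$ with a residual error floor $\lambda^{-2}$; there is no mechanism in that calculation that generates either a multiplicative defect $(1-\zeta_T)$ with $\zeta_T = T^{-1/4}\pr(\suc)^{-1/2}$ or, especially, the binary-entropy term $h(T^{-1/4})$ inside $\eta_T$. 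Those two terms are fingerprints of the blowing-up argument: one enlarges each decoding set by a Hamming ball of radius $\ell \approx T^{3/4}$, the measure-concentration step bounds the post-blow-up error by a quantity depending on the original set's measure (whence the $\pr(\suc)^{-1/2}$), and a Fano/list-decoding step then pays $\log_2 |B(\ell)| \approx T\,h(T^{-1/4})$ for the list size, plus the $+1$ that becomes the $T^{-1}$. Your closing remark that ``the entropy cost of restricting to the typical set'' produces $h(T^{-1/4})$ is not a step of the information-spectrum argument you set up, and no typical-set restriction on the columns has an entropy cost of that form. To make the proposal work you would either have to switch to the blowing-up route (in which case you would be reproducing \cite{tan}) or accept that your route proves a \emph{different} inequality with different (in fact typically stronger, exponentially decaying) error terms --- which is precisely the contrast the paper draws between its Theorem \ref{thm:ppvmain} and the cited result.
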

Rearranging, and writing $I = I( \mat{X}_{\sett{K} \setminus \sett{L}}; \mat{X}_{\sett{L}}, \vc{Y})$  we obtain
$$ \pr(\suc) \leq \frac{1}{T^{1/2} \left( 1 - T (I + \eta_T)/ \log_2 \binom{ N - |\sett{L}|}{K  - |\sett{L}|} \right)^2}
\simeq \frac{1}{T^{1/2} \delta^2},$$
 taking $T^* =  \log_2 \binom{ N - |\sett{L}|}{K  - |\sett{L}|}/I$ for $T = (1-\delta) T^*$ . This gives a strong converse, though
not the exponential decay achieved in \eqref{eq:johnson} above. 
However Tan and Atia's results \cite{tan} are valid in a variety of settings and noise models.

Pedagogically, it is worth noting a parallel between these various approaches and treatments of the channel coding problem in the literature.
That is \eqref{eq:chan}, due to Chan et al. \cite{chan}, is proved using an argument based on Fano's inequality,
 parallelling the proof of Shannon's noisy coding theorem
 exemplified for example in \cite[Section 8.9]{cover}. The argument of  Tan and Atia \cite{tan} is based on 
Marton's blowing up lemma, 
mirroring the treatment of Shannon's theorem in the book of Csisz\'{a}r and K\"{o}rner \cite[Section 6]{csiszar6}. Our work in the 
non-adaptive case is based on the more  recent work of Polyanskiy, Poor and
Verd\'{u} \cite{polyanskiy2}, which has been adapted to the problem of data compression in \cite{kostina}.

The paper \cite{polyanskiy3}, written by the same authors as \cite{polyanskiy2}, extends their approach to channels with
feedback, which corresponds to the adaptive case of group testing. 
 We prefer to give bounds based on  older works of 
 Gallager \cite{gallager2}, Kemperman \cite{kemperman}, Kesten \cite{kesten} and Wolfowitz
\cite{wolfowitz3}.  Note that in the non-adaptive case, as described in \cite[Section III.G]{polyanskiy2}, the results of 
Polyanskiy et al. are stronger than the results of Wolfowitz \cite{wolfowitz3} and Gallager \cite{gallager2} type. However,
in the adaptive case, these earlier results appear easier to modify in the group testing context.
\section{Hypothesis testing and non-adaptive group testing} \label{sec:hyptest}
We first state a result, Theorem \ref{thm:ppvmain}, which implies strong converse results that generalize
Theorem \ref{thm:bja} in the non-adaptive case. The key observation comes from Polyanskiy, Poor and Verd\'{u} \cite{polyanskiy2} who found a 
relationship between channel coding  and hypothesis testing. Since the Neyman-Pearson lemma
gives us the optimal hypothesis test,  the paper \cite{polyanskiy2}  deduces strong bounds on coding error probabilities.
\begin{definition}
Write $\beta_{1-\epsilon}( P, Q)$ for the smallest possible type II error for hypothesis
tests (with type I error probability $\epsilon$)
 deciding between $P$ and $Q$. \end{definition}
Our contribution in this section is to use this same analogy for the group testing problem, given
a process generating random chosen defective sets $\vc{U}$ (a source).
To some extent this is simply a question of 
adapting the notation of \cite{polyanskiy2}. However, one generalization which is important for us is that we do not 
require that $\vc{U}$ is uniform (allowing us to consider probabilistic as well as combinatorial
group testing). This was not considered in \cite{polyanskiy2}, largely because for channel coding
it seems less natural to consider non-uniform $\vc{U}$.

Since we consider non-adaptive group testing, we fix $\mat{X} = \sett{X}$  in advance.
We write $\prv{KY}{k,y}$ for the joint probability distribution of $\vc{K}$ and $\vc{Y}$
and consider an algorithm which estimates (decodes)  the defective set $\vc{Z} = \whu$,
using only outputs $\vc{Y}$ and  test matrix $\mat{X} = \sett{X}$. 
Since $\mat{X} = \sett{X}$ is fixed, we simplify the notation of \eqref{eq:decoder} above and 
 write $\prv{Z|Y}{z|y}$ for the probability that the estimator gives $\vc{Z} = \vc{z}$ when $\vc{Y} = \vc{y}$.
We prove the following key result:
\begin{theorem} \label{thm:ppvmain} Suppose that the group testing model satisfies the Only Defects Matter property, Definition
\ref{def:odm}. For any non-adaptive choice of test design, any estimation rule $\prV{Z|Y}$ and probability mass function $\qrV{Y}$:
\begin{equation} \label{eq:bounds}
 \beta_{1-\epsilon}(\prV{KY}, \prV{K} \times \qrV{Y}) \leq \sum_{\vc{z} \in \{0,1 \}^N}  \prv{U}{z} \qst{z},
\end{equation} where  $\qst{z} = \sum_{\vc{y}} \qrv{Y}{y} \prv{Z|Y}{z|y}$ is the probability that $
\vc{Y} \sim \qrV{Y}$ is decoded to $\vc{z}$ and $1-\epsilon = \pr(\suc)$.
\end{theorem}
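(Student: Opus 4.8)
The plan is to connect group testing success to a hypothesis test between $\prV{KY}$ and the product $\prV{K}\times\qrV{Y}$, exactly mirroring the Polyanskiy--Poor--Verdú reduction. The guiding idea is that a decoder with high success probability yields a low type-I-error test, so its properties bound $\beta_{1-\epsilon}$ from above. First I would fix the non-adaptive test matrix $\mat{X}=\sett{X}$ and recall that, by the Only Defects Matter property, $\vc{K}$ is a deterministic function of $\vc{U}$ and that $\vc{Y}$ depends on $\vc{U}$ only through $\vc{K}$. This lets me pass freely between $\vc{U}$ and $\vc{K}$: since distinct defectivity vectors give distinct $\vc{K}$ under a good design, I can treat the decoder's output $\vc{Z}=\whu$ as living in the same space and write the success event as $\{\vc{Z}=\vc{U}\}$.

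Let me confirm my understanding of what I'm proving. Examining the right-hand side $\sum_{\vc{z}}\prv{U}{z}\,\qst{z}$, where $\qst{z}=\sum_{\vc{y}}\qrv{Y}{y}\,\prv{Z|Y}{z|y}$: this is precisely the probability of the event $\{\vc{Z}=\vc{U}\}$ in the \emph{modified} experiment where $\vc{Y}$ is drawn from $\qrV{Y}$ independently of $\vc{U}$ (equivalently of $\vc{K}$), and then fed to the same decoder. So the strategy is to define a binary hypothesis test directly from the decoder. The plan is to let the test statistic be the indicator $g(\vc{k},\vc{y}):=\prv{Z|Y}{z|y}$ viewed as a (possibly randomized) acceptance rule for the null $\prV{KY}$: concretely, upon observing $(\vc{k},\vc{y})$, accept the hypothesis ``$P$'' (that $\vc{Y}$ came from the true channel) with probability equal to the decoder's probability of outputting the $\vc{z}$ consistent with $\vc{k}$.

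Here is the core computation I would set up. Under the null $\prV{KY}$, the type-I error (probability of rejecting $P$ when $P$ is true) is $1-\pr(\suc)=\epsilon$, because the acceptance probability averaged against $\prV{KY}$ is exactly the decoder's success probability; this fixes the type-I level at $\epsilon$, so the test is admissible in the class defining $\beta_{1-\epsilon}$. Under the alternative $\prV{K}\times\qrV{Y}$, the probability of (wrongly) accepting $P$ is the type-II error, and averaging the same acceptance rule against the product measure produces exactly $\sum_{\vc{z}}\prv{U}{z}\,\qst{z}$, after using that $\prV{K}$ and $\prV{U}$ carry the same information under the deterministic map. Since $\beta_{1-\epsilon}$ is by definition the \emph{smallest} type-II error over all tests of level $\epsilon$, and our particular test achieves type-II error equal to the right-hand side, the inequality \eqref{eq:bounds} follows immediately.

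\emph{The main obstacle} I anticipate is bookkeeping the change of variables between $\vc{U}$, $\vc{K}$ and $\vc{Z}$ cleanly, so that the acceptance rule built from $\prv{Z|Y}{z|y}$ genuinely defines a valid randomized test with type-I error exactly $\epsilon$ and type-II error exactly the claimed sum. In particular I must check that restricting the decoder output to the single value $\vc{z}$ matching the observed $\vc{k}$ (rather than summing over all outputs) is what makes the type-I error equal $1-\pr(\suc)$ rather than something larger, and that the Only Defects Matter conditional-independence structure is what guarantees the test statistic depends on the data only through $(\vc{k},\vc{y})$. The deeper conceptual point---that no further work is needed beyond exhibiting one valid test---rests entirely on the optimality built into the definition of $\beta_{1-\epsilon}$ via the Neyman--Pearson lemma, so once the test is correctly specified the proof terminates with a one-line appeal to that definition.
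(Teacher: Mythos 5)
Your overall architecture matches the paper's: both build an explicit randomized test for $\prV{KY}$ versus $\prV{K}\times\qrV{Y}$ out of the decoder, compute its two error probabilities, and invoke the definition of $\beta_{1-\epsilon}$ as a minimum over valid tests. But there is a genuine gap in the one step that carries all the content: the acceptance rule. You propose to accept, upon observing $(\vc{k},\vc{y})$, with probability equal to the decoder's probability of outputting ``the $\vc{z}$ consistent with $\vc{k}$'', justified by the assumption that distinct defectivity vectors give distinct $\vc{K}$ ``under a good design''. That injectivity assumption is unwarranted and generally false (a single test containing every item gives $K_1=$ the number of defectives, so many $\vc{u}$ share one $\vc{k}$; and a converse must in any case cover bad designs). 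Without injectivity your rule is either ill-defined or, if you accept with probability $\sum_{\vc{z}:\,\vc{z}^T\sett{X}=\vc{k}}\prv{Z|Y}{z|y}$, the type II error becomes $\sum_{\vc{z}}\qst{z}\,\pr(\vc{U}^T\sett{X}=\vc{z}^T\sett{X})\geq\sum_{\vc{z}}\prv{U}{z}\qst{z}$, which upper-bounds $\beta_{1-\epsilon}$ by something \emph{larger} than the claimed right-hand side and so does not prove the theorem; while if you instead pick one representative $\vc{z}(\vc{k})$ per $\vc{k}$, the acceptance probability under $\prV{KY}$ is no longer guaranteed to be at least $\pr(\suc)$, so the test may violate the type I constraint. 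You correctly flag this change of variables as ``the main obstacle'', but your proposed resolution does not overcome it.

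The missing idea is the paper's posterior randomization: given $(\vc{k},\vc{y})$, run the decoder on $\vc{y}$ to obtain $\vc{z}$ and accept with probability $\prv{U|K}{z|k}=\prv{U}{z}\prv{K|U}{k|z}/\prv{K}{k}$, rather than with an indicator of consistency. With this weight, the Markov chain $\vc{U}\to\vc{K}\to\vc{Y}\to\vc{Z}$ makes the acceptance probability under $\prV{KY}$ exactly $\pr(\suc)=1-\epsilon$, and under $\prV{K}\times\qrV{Y}$ the factor $\prv{K}{k}$ cancels and $\sum_{\vc{k}}\prv{K|U}{k|z}=1$, leaving exactly $\sum_{\vc{z}}\prv{U}{z}\qst{z}$. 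Note also that you cannot sidestep the issue by testing $\prV{UY}$ against $\prV{U}\times\qrV{Y}$ (where your rule would work verbatim): since $(\vc{K},\vc{Y})$ is a deterministic function of $(\vc{U},\vc{Y})$, data processing gives $\beta_{1-\epsilon}(\prV{UY},\prV{U}\times\qrV{Y})\leq\beta_{1-\epsilon}(\prV{KY},\prV{K}\times\qrV{Y})$, so the bound you would obtain is for the wrong (smaller) quantity. The posterior-weighted acceptance rule is therefore not bookkeeping but the essential step.
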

\begin{proof} See Appendix \ref{sec:proofppvmain}. \end{proof}
\begin{example} In the noiseless non-adaptive case consider any defective set distribution $\prV{U}$. 
Taking $\qrV{Y} \equiv 1/2^T$, the optimal rule is to accept $\prV{KY}$ with probability $1-\epsilon$ if
$\vc{x} = \vc{y}$, and to reject $\prV{KY}$ otherwise (this corresponds to taking $\lambda = 1-\epsilon$ and $d^* = 0$ in
Example \ref{ex:bsc} above). 
We obtain by Theorem \ref{thm:ppvmain} that 
\begin{equation} \label{eq:optimalch} (1-\epsilon)/2^T =  \beta_{1-\epsilon}( \prV{KY}, \prV{K} \times \qrV{Y})  \leq \sum_{\vc{z}  \in \{0,1 \}^N}  \prv{U}{z} \qst{z}.
  \end{equation}
\begin{enumerate}
\item {\bf [Uniform case]}
In particular, if  the $\prv{U}{z} = \II(\vc{z} \in \sett{M})/M$
for some set $\sett{M}$ of size $M = |\sett{M}|$, the RHS of \eqref{eq:optimalch}
becomes $ \leq 1/M$, this means that (exactly as in \cite[Theorem 27]{polyanskiy2}):
\begin{equation} \label{eq:uniform}
\beta_{1-\epsilon}( \prV{KY}, \prV{K} \times \qrV{Y}) \leq 1/M
\end{equation}
We deduce from Theorem \ref{thm:ppvmain} that 
\begin{equation} \label{eq:noiselessiid}
\pr(\suc) \leq 2^T/M, \end{equation}
confirming Theorem \ref{thm:bja} (under the additional assumption of non-adaptivitity; we discuss how to remove this
assumption in Theorem \ref{thm:noiseless} below).
\item \label{ex:noiseless} {\bf [General case]}
In general, in the noiseless non-adaptive case,
we write $\curP{m}$ for the sum of the largest $m$ values of $\prv{U}{z}$.
For each defective set $\vc{U}$, we  write $\vc{X} =  \vc{Y} = \vtheta(\vc{U})$. For a particular $\vc{y}$, we 
 write $A(\vc{y}) = \vtheta^{-1}(\vc{y}) = \{ \vc{z}: \vtheta(\vc{z}) = \vc{y} \}$ for the defective sets that get mapped to
$\vc{y}$ by the testing procedure.
We 
write $\pmax{y}  
= \max_{\vc{z} \in A(\vc{y})} \prv{U}{z}
$ for the maximum probability in
$A(\vc{y})$ and $\sett{U}^*(\vc{y}) = \{\vc{u}: \prv{U}{u} = \pmax{y} \}$ for the 
collection of defective sets achieving this probability.
For each $\vc{x}$, pick a string $\vc{u}^*(\vc{x}) \in \sett{U}^*(\vc{x})$ in any arbitrary fashion; and note that there are
up to $2^T$ strings $\vc{u}^*(\vc{x})$, which are distinct, since they each map to a different value under $\theta$.
These various definitions are illustrated in Figure \ref{fig:setfigure}.

In general  using  \eqref{eq:optimalch} we deduce that
 \begin{eqnarray} 
 \pr( \suc )  = (1-\epsilon) 
& \leq & 2^T \sum_{\vc{z} \in \{0,1 \}^N}  \prv{U}{z} \qst{z} \nonumber \\
& \leq &  \sum_{\vc{y} \in \{ 0, 1 \}^T }  \sum_{\vc{z} \in \{ 0, 1 \}^N} \prv{U}{z} \prv{Z|Y}{z|y}  \label{eq:matches} \\
& \leq & \sum_{\vc{y} \in \{ 0, 1 \}^T }  \sum_{\vc{z} \in A(\vc{y})} \pmax{y} \prv{Z|Y}{u|y} \label{eq:countinga} \\
& \leq &  \sum_{\vc{y} \in \{ 0, 1 \}^T } \pmax{y} \nonumber \\
& = & \sum_{\vc{y} \in \{ 0,1 \}^T} \prV{U} \left( \vc{u}^*(\vc{y}) \right) \nonumber \\
& \leq & \curP{2^T}. \label{eq:counting}
\end{eqnarray}
Here \eqref{eq:countinga} follows since for given $\vc{y}$
the success probability is maximised by restricting to $\prv{Z|Y}{z|y}$ supported on the
set  $\vc{z} \in A(\vc{y})$, so we know that $\prv{U}{z} \leq \pmax{y}$.
\eqref{eq:counting} follows since there are at most $2^T$ separate messages $\vc{X} = \vc{x}$, so at most $2^T$ 
distinct values $\vc{u}^*(\vc{x})$. This result generalizes \eqref{eq:noiselessiid}.
\end{enumerate}
\end{example}

Note that (as expected)
the success probability is maximised by the maximum likelihood decoder $\prV{Z|Y}$ which places all its support on
members of $\vc{U}^*(\vc{y})$.
In Theorem \ref{thm:noiseless} below we extend the result \eqref{eq:counting} to hold even in the adaptive case, extending Theorem \ref{thm:bja}.
Theorem \ref{thm:ppvmain} gives a converse for the non-adaptive binary symmetric channel case:
\begin{example} \label{ex:bsc}
Suppose
the output of standard combinatorial noiseless non-adaptive group testing 
$\vc{X}$ is fed through a memoryless binary symmetric channel with error probability $p < 1/2$ to produce $\vc{Y}$.
We write $x_i = \II (k_i \geq 1)$, and observe
that
$P(y_i | k_i) = (1-p)^{T-d(x_i,y_i)} p^{d(x_i,y_i)}$, where $d$ represents the Hamming distance.
Hence if $\qrV{Y} \equiv 1/2^T$, the likelihood ratio is
$$ \frac{ \prv{KY}{k,y}}{\prV{K} \times \qrv{Y}{k,y}} 
= \frac{ \prv{Y|K}{y|k}}{ \qrv{Y}{y}}
= \frac{  p^{d( \vc{x}, \vc{y})} (1-p)^{T - d( \vc{x}, \vc{y})}}
{1 /2^{T}} \propto \left( \frac{p}{1-p} \right)^{d( \vc{x}, \vc{y})}.$$
By the Neyman-Pearson lemma, the optimal rule is to accept $\prV{KY}$ if $d( \vc{x}, \vc{y}) < d^*$, to 
accept $\prV{KY}$ with probability $\lambda$ if $d( \vc{x}, \vc{y}) = d^*$ and to reject $\prV{KY}$ otherwise.
 In calculations which are essentially the same as in \cite[Theorem 35]{polyanskiy2}, we can find $d^*$ and $\lambda$
using \eqref{eq:uniform}:
\begin{eqnarray*}
\frac{1}{\binom{N}{K}} \geq
 \beta_{1-\epsilon}( \prV{KY}, \prV{K} \times \qrV{Y}) 
& = &  \pr( \bin(T,1/2) \leq d^*-1) + \lambda \pr (\bin(T, 1/2) = d^*).
\end{eqnarray*}
Then, for this value of $d^*$ we  write that 
\begin{eqnarray*}
 \pr(\suc) & = & 1- \pr(\mbox{\;type I error\;}) = \sum_{\vc{x}, \vc{y}} \prv{KY}{k,y} \pr(\mbox{\;accept $\prV{KY}$\;}) \\
& = &  \pr( \bin(T,p) \leq d^*-1) + \lambda \pr (\bin(T,p) = d^*).
\end{eqnarray*}
In Figure \ref{fig:bscnonadaptive}, 
we plot this in the case $N=500$, $K=10$, $p=0.11$, and for comparison plot the Fano bound taken from \cite[Theorem 2]{chan}:
\begin{equation} \label{eq:fanocombnoisy} \pr(\suc) \leq \frac{T(1-h(p))}{\log_2 \binom{N}{K}}. \end{equation}
In Figure \ref{fig:bscnonadaptiverates} we give the group testing analogue of \cite[Figure 1]{polyanskiy2}. We use the regime
of \cite{johnson33}; that is, we vary $N$ and take $K = \lceil N^{1-\beta} \rceil$, where $\beta = 0.37$ (this gives the value $K=10$ for $N=500$). Again taking $p=0.11$, we fix  $\pr(\suc) = 0.999$, and use
the lower bound on $T$ corresponding to the analysis above. This gives an upper bound on the
 rate $\log_2 \binom{N}{K}/T$, which we plot in Figure \ref{fig:bscnonadaptiverates}. Note that in this finite size regime, exactly
as in  \cite[Figure 1]{polyanskiy2}, the resulting rate bound is significantly smaller than the capacity $C= 1-h(p) = 0.500$, which
we only approach asymptotically.
\end{example}

\begin{figure}[ht!]
\begin{center}
\includegraphics[width=10cm]{./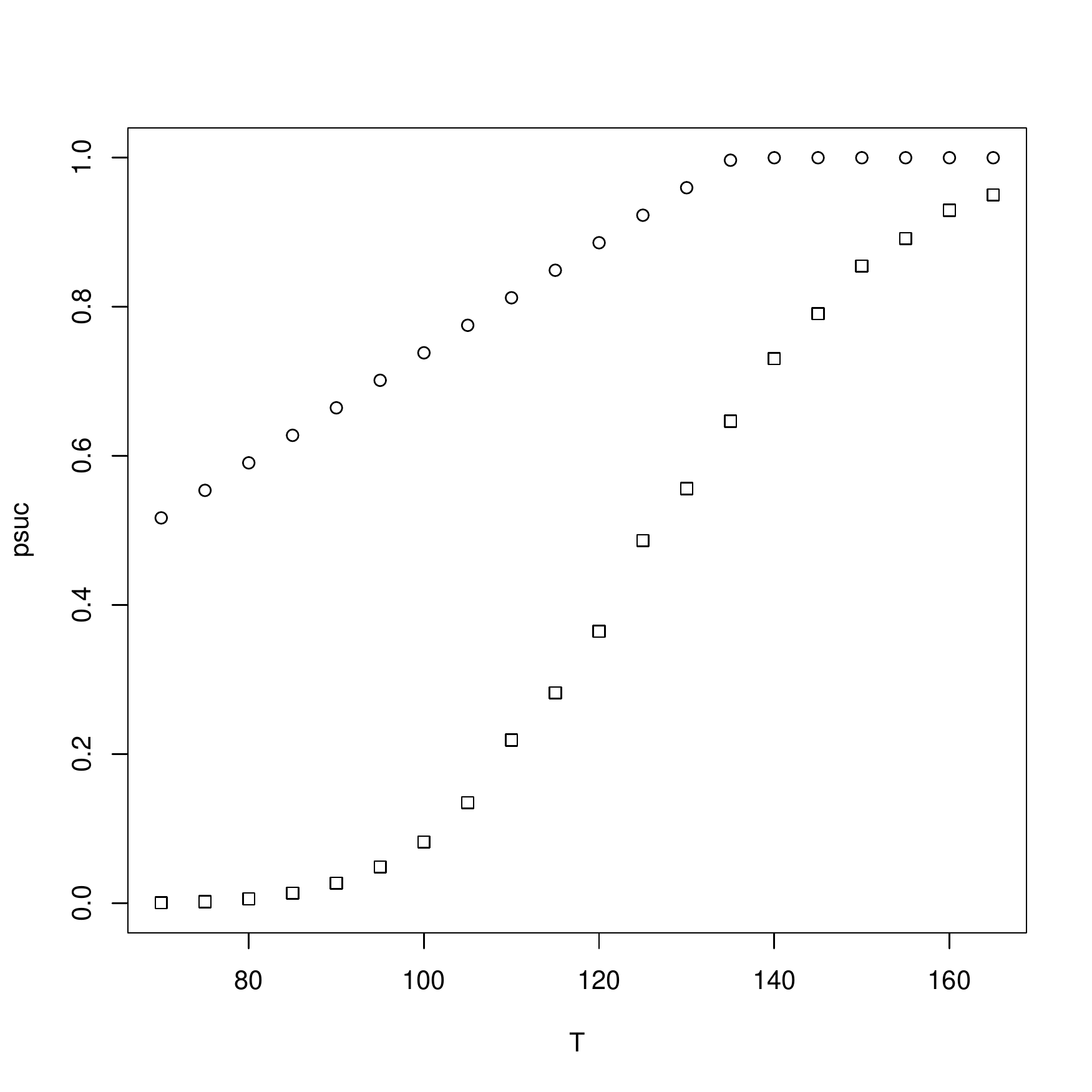}
\caption{Combinatorial non-adaptive 
group testing with $N = 500$ and $K=10$, where the output $\vc{X}$  of standard noiseless group testing  is fed into a
memoryless binary symmetric channel with $p=0.11$.
We vary the number of tests $T$ between 70 and 165, and plot the success
probability on the $y$ axis.
We plot the upper bound on $\pr(\suc)$ given by Example \ref{ex:bsc} using $\square$.
For comparison, we plot the (weaker) Fano bound \eqref{eq:fanocombnoisy}   taken
from \cite{chan} as $\circ$.
 \label{fig:bscnonadaptive}}
\end{center}
\end{figure}

\begin{figure}[ht!]
\begin{center}
\includegraphics[width=10cm]{./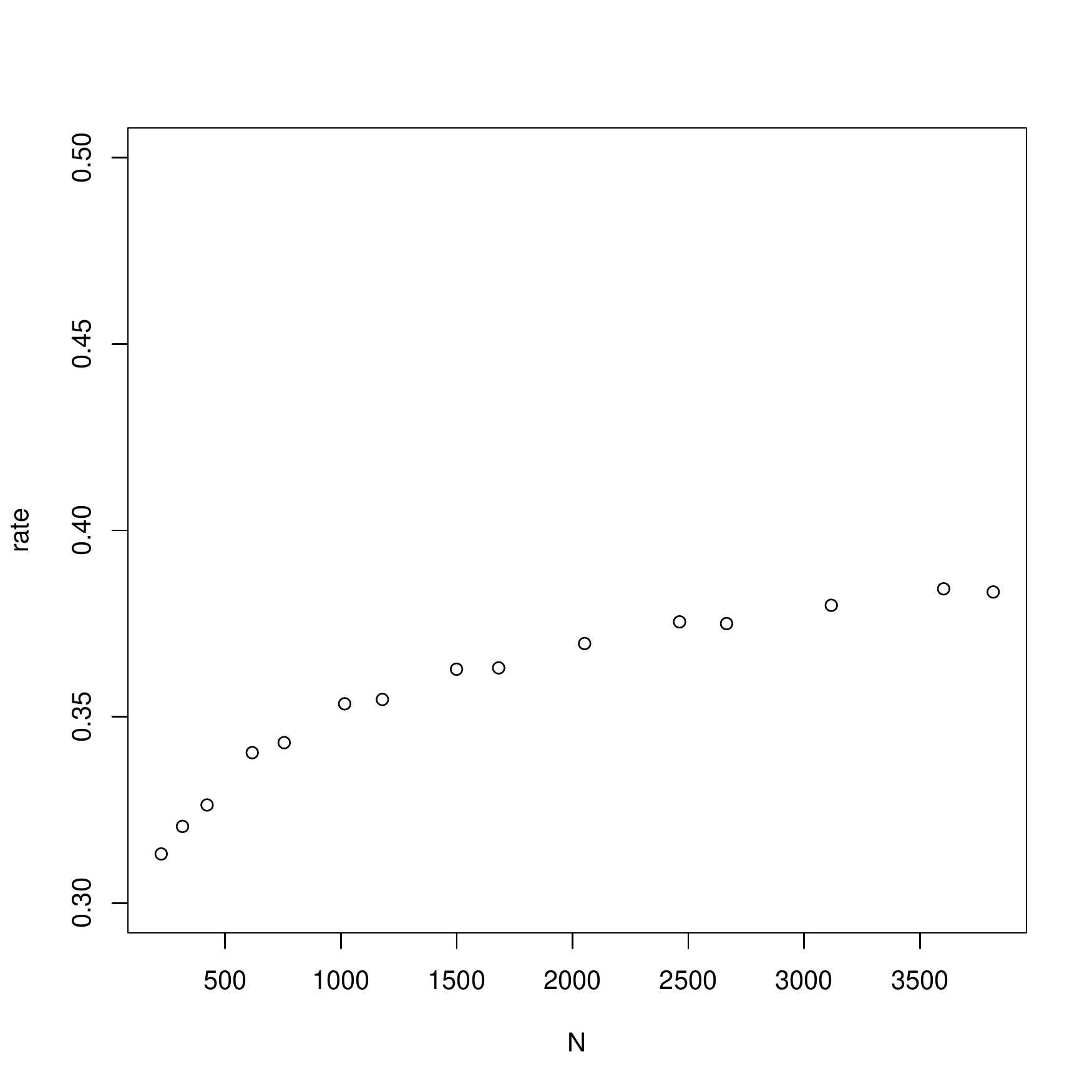}
\caption{Combinatorial non-adaptive 
group testing for various values of $N$ and $K=N^{0.37}$, where the output $\vc{X}$  of standard noiseless group testing  is fed into a
memoryless binary symmetric channel with $p=0.11$. In each case we choose $T$ large enough such that the 
success probability $\pr(\suc) = 0.999$.
We plot the upper bound on the rate given by Example \ref{ex:bsc}, and observe that this is significantly lower than
the value of the capacity $C = 0.500$ in this finite blocklength regime.
 \label{fig:bscnonadaptiverates}}
\end{center}
\end{figure}
\section{Adaptive group testing} \label{sec:adaptive}
We now consider  adaptive group testing, and
give a result (Proposition \ref{prop:basic}) which implies a strong converse, assuming that a concentration inequality is satisfied.
For any $t$, we write $\vc{Y}_{1,t} = \{ Y_1, \ldots, Y_t \}$ and $\sett{X}_{1,t} = \{ \sett{X}_1, \ldots, \sett{X}_t \}$.
We first prove the following representation result for the joint probability distribution of $(\vc{U}, \mat{X}, \vc{Y})$ under the model of adaptivity:
\begin{lemma} \label{lem:jointprob}
Assuming the Only Defects Matter property (Definition \ref{def:odm}) with transition matrix $\pr(Y_t = y|K_t = k) =
P(y|k)$ for all $k,y,t$, we can write
\begin{equation} \label{eq:jointprob}
 \prv{U, \mat{X}, Y}{u, \sett{X}, y} = \prv{U}{u}  \pcaus
\prod_{t=1}^T P( y_t | k_t), \end{equation}
where  $k_t = \vc{u}.\sett{X}_t$ is the number of defectives in the $t$th test and
\begin{equation} \label{eq:jointprob2}
\pcaus
:= \prod_{t=1}^T \prv{\mat{X}_t| \vc{Y}_{1,t-1}, \mat{X}_{1,t-1}}{ \sett{X}_t | \vc{y}_{1,t-1}, \sett{X}_{1,t-1} }
\end{equation}
is the causal conditional probability, with the key property that for any fixed $\vc{y}$:
\begin{equation} \label{eq:sumx} \sum_{\sett{X}} \pcaus = 1. \end{equation}
\end{lemma}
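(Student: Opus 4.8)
The plan is to expand the joint law $\prv{U, \mat{X}, Y}{u, \sett{X}, y}$ by the chain rule, respecting the chronological order in which the variables are realised, and then to collapse each conditional factor using the two structural features of the adaptive model. The natural order is $\vc{U}$ followed by the interleaved sequence $\sett{X}_1, Y_1, \sett{X}_2, Y_2, \ldots, \sett{X}_T, Y_T$, since each test pool is designed before its outcome is seen and each outcome is seen before the next pool is designed. The chain rule gives
\begin{equation*}
\prv{U, \mat{X}, Y}{u, \sett{X}, y} = \prv{U}{u} \prod_{t=1}^T \pr\!\left(\sett{X}_t \mid u, \sett{X}_{1,t-1}, y_{1,t-1}\right) \pr\!\left(y_t \mid u, \sett{X}_{1,t}, y_{1,t-1}\right).
\end{equation*}
I stress that the interleaving is essential: marginally $\mat{X}$ and $\vc{U}$ are dependent (pools react to outcomes, which depend on $\vc{U}$), so one cannot simply factor $\pr(\sett{X}\mid u)$ as $\pr(\sett{X})$; the conditional independence only holds step by step.

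First I would simplify the pool-design factors. The adaptive rule selects $\sett{X}_t$ from the previously revealed pools $\sett{X}_{1,t-1}$ and outcomes $y_{1,t-1}$ alone, and in particular has no access to the hidden vector $\vc{U}$, so $\pr(\sett{X}_t \mid u, \sett{X}_{1,t-1}, y_{1,t-1}) = \pr(\sett{X}_t \mid \sett{X}_{1,t-1}, y_{1,t-1})$. The product of these factors over $t$ is exactly the causal conditional probability $\pcaus$ of \eqref{eq:jointprob2}.

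Next I would simplify the outcome factors. Conditioning on $(u, \sett{X}_{1,t})$ determines $k_t = \vc{u}.\sett{X}_t$, and by the Only Defects Matter property (Definition \ref{def:odm}) together with the static, memoryless assumptions, $Y_t$ is a random function of $K_t$ alone, conditionally independent of $\vc{U}$, of the pools, of the counts $(K_s)_{s\neq t}$, and of the earlier outcomes $Y_{1,t-1}$, given $K_t$. Hence $\pr(y_t \mid u, \sett{X}_{1,t}, y_{1,t-1}) = P(y_t \mid k_t)$, and the product of these factors is $\prod_{t=1}^T P(y_t \mid k_t)$. Substituting both simplifications into the displayed identity gives \eqref{eq:jointprob}.

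Finally, the normalization \eqref{eq:sumx} follows by summing $\pcaus = \prod_{t=1}^T \pr(\sett{X}_t \mid \sett{X}_{1,t-1}, y_{1,t-1})$ over all test matrices, successively over $\sett{X}_T, \sett{X}_{T-1}, \ldots, \sett{X}_1$. For fixed $\vc{y}$ and fixed $\sett{X}_{1,T-1}$, only the last factor depends on $\sett{X}_T$ and it is a genuine conditional distribution, so $\sum_{\sett{X}_T} \pr(\sett{X}_T \mid \sett{X}_{1,T-1}, y_{1,T-1}) = 1$; peeling off $\sett{X}_{T-1}, \ldots, \sett{X}_1$ in turn collapses the whole sum to $1$. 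I expect the only delicate point to be the outcome reduction: one must check that conditioning on the full past $(u, \sett{X}_{1,t}, y_{1,t-1})$ genuinely collapses to conditioning on $k_t$, which relies on combining all of the independence statements bundled into the static, memoryless Only Defects Matter hypothesis rather than any one of them alone.
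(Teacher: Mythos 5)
Your proposal is correct and follows essentially the same route as the paper: the chronological chain-rule expansion interleaving $\sett{X}_t$ and $Y_t$ is exactly the paper's ``collapsing product,'' and the two conditioning reductions (the pool design cannot see $\vc{U}$; the outcome depends only on $k_t$ by Only Defects Matter) are the same two steps the paper invokes. Your explicit peeling argument for the normalization \eqref{eq:sumx} is a small bonus, since the paper merely asserts that property from the product structure.
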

\begin{proof} We write (omitting the subscripts on $\pr$ for brevity) a collapsing product of the 
form:
\begin{eqnarray}
\pr( \vc{u}, \sett{X}, \vc{y}) & = & \pr( \vc{u}) \prod_{t=1}^T  \frac{ \pr \left( 
\vc{ u}, \sett{X}_{1,t}, \vc{y}_{1,t}  \right)}
{ \pr \left(  \vc{ u}, \sett{X}_{1,t-1}, \vc{y}_{1,t-1}  \right)} \nonumber \\
& = & \pr( \vc{u}) \prod_{t=1}^T  \pr \left(  \sett{X}_t, y_t |  \vc{ u}, \sett{X}_{1,t-1}, \vc{y}_{1,t-1}  \right)  \nonumber \\
& = & \pr( \vc{u}) \prod_{t=1}^T  \pr \left(  y_t |  \sett{X}_t, \vc{ u}, \sett{X}_{1,t-1},  \vc{y}_{1,t-1}  \right)  
 \pr \left(  \sett{X}_t |  \vc{ u}, \sett{X}_{1,t-1}, \vc{y}_{1,t-1}  \right) 
\label{eq:tosplit} \\
& = & \pr( \vc{u}) \prod_{t=1}^T  \pr \left(  y_t |  k_t \right)  
 \pr \left(  \sett{X}_t |   \sett{X}_{1,t-1}, \vc{y}_{1,t-1}  \right) \nonumber
\end{eqnarray}
where we remove the conditioning from the terms in \eqref{eq:tosplit} since $y_t$ is the result of sending $k_t = 
\vc{U}^T \sett{X}_t $ through a memoryless channel (the output of which is 
independent of previous test designs and their output) and since the choice of the $t$th test pool $\sett{X}_t$ is conditionally
independent of $\vc{U}$,  given the previous tests and their output.
\end{proof}
Next  we adapt arguments given by 
Wolfowitz \cite{wolfowitz3} which give strong converses for symmetric channels (even with feedback). Wolfowitz's book \cite{wolfowitz3} reviews earlier work of
his \cite{wolfowitz2}, and results of Kemperman \cite{kemperman} and Kesten \cite{kesten}.
  Wolfowitz \cite{wolfowitz3} and Kemperman \cite{kemperman} use Chebyshev's inequality to bound tail probabilities;
the fact that stronger results than Chebyshev could be used for the case without feedback was stated as \cite[Exercise 5.35]{gallager2}. 
Note the similarity of Proposition \ref{prop:basic} to \cite[Theorem 5.8.5]{gallager2}, a result originally due to Wolfowitz and
discussed for example as \cite[Theorem 10]{polyanskiy2}.
\begin{definition} \label{def:typset}
Fix a probability mass function $\qrV{Y}$ on $\{ 0, 1 \}^T$.
Define the  typical set $\typ$ by
\begin{equation} \label{eq:typset}
 \typ :=  \left\{   (\vc{k}, \vc{y}):     \frac{ \prod_{t=1}^T P \left( y_t | k_t \right)}{ \qrv{Y}{y} } \leq \exp( T  \beta)   \right\}
.\end{equation}
(Note that this set can be expressed in terms of the information density $ i(\vc{k} ; \vc{y})$ of \cite{polyanskiy2}).
\end{definition}
Write $\prv{Z|Y, \mat{X}}{z|y,\sett{X}}$ for the probability that some algorithm estimates the defective set as $\whu =
\vc{Z} = \vc{z} \in \{0,1\}^N$ when the group testing process with test matrix $\mat{X} = \sett{X}$
 returns $\vc{Y} =  \vc{y}$.
\begin{proposition} \label{prop:basic} Take any probability mass function $\qrV{Y}$ on $\{ 0, 1 \}^T$.
For any model of group testing (adaptive or non-adaptive), the success probability satisfies
\begin{equation} \label{eq:basic}
 \pr( \suc ) \leq  \exp( T \beta) \sum_{\vc{z} \in \{0,1 \}^N}   \prv{U}{z} \qst{z} +  \pr \left( (\vc{K}, \vc{Y}) \notin \typ \right) . 
\end{equation}
where we write $\qst{z} = \sum_{ \vc{y} , \sett{X} }  \prv{Z|Y, \mat{X}}{z|y,\sett{X}}
 \pcaus \qrv{Y}{y} $.
\end{proposition}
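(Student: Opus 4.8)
The plan is to follow the Wolfowitz--Kemperman threshold argument, using the factorization of the joint law from Lemma \ref{lem:jointprob} rather than a Neyman--Pearson hypothesis test (which is unavailable in the adaptive case, since the test matrix is generated on the fly). First I would write the success probability as a sum over all triples $(\vc{u}, \sett{X}, \vc{y})$ of the joint probability of that triple times the probability that the decoder outputs $\vc{z} = \vc{u}$:
\begin{equation*}
\pr(\suc) = \sum_{\vc{u}, \sett{X}, \vc{y}} \prv{U,\mat{X},Y}{u,\sett{X},y} \, \prv{Z|Y,\mat{X}}{u|y,\sett{X}}.
\end{equation*}
Substituting the decomposition \eqref{eq:jointprob}, each summand becomes $\prv{U}{u} \, \pcaus \left(\prod_{t=1}^T P(y_t|k_t)\right) \prv{Z|Y,\mat{X}}{u|y,\sett{X}}$, where $k_t = \vc{u}.\sett{X}_t$ is the deterministic count in the $t$th test.

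Next I would split the sum according to whether the pair $(\vc{k}, \vc{y})$, with $\vc{k}$ the deterministic function of $(\vc{u}, \sett{X})$, lies in the typical set $\typ$ of Definition \ref{def:typset} or not. The contribution from $(\vc{k}, \vc{y}) \notin \typ$ is at most $\pr((\vc{K},\vc{Y}) \notin \typ)$, since each decoder weight satisfies $\prv{Z|Y,\mat{X}}{u|y,\sett{X}} \leq 1$ and the remaining factors reconstitute the joint law restricted to the complement of $\typ$. This accounts for the second term in \eqref{eq:basic}.

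For the contribution from $(\vc{k},\vc{y}) \in \typ$, I would invoke the defining inequality \eqref{eq:typset}, namely $\prod_{t=1}^T P(y_t|k_t) \leq \exp(T\beta)\, \qrv{Y}{y}$, to replace the channel product by $\exp(T\beta)\,\qrv{Y}{y}$. Dropping the restriction to $\typ$ (enlarging the range of a nonnegative sum) gives
\begin{equation*}
\pr(\suc,\ (\vc{K},\vc{Y}) \in \typ) \leq \exp(T\beta) \sum_{\vc{u},\sett{X},\vc{y}} \prv{U}{u} \, \pcaus \, \qrv{Y}{y} \, \prv{Z|Y,\mat{X}}{u|y,\sett{X}}.
\end{equation*}
Carrying out the sums over $\sett{X}$ and $\vc{y}$ for each fixed $\vc{u}$ recognizes the inner sum as precisely $\qst{u}$, so the right-hand side equals $\exp(T\beta) \sum_{\vc{z}} \prv{U}{z} \qst{z}$ after relabeling $\vc{u}$ as $\vc{z}$. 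Combining the two contributions yields \eqref{eq:basic}.

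The step I expect to require the most care is the bookkeeping around the causal conditional probability $\pcaus$: it must be treated as a fixed reference kernel depending only on $(\sett{X}, \vc{y})$ and \emph{not} on $\vc{u}$, so that $\qst{z}$ is well defined and, via the key property \eqref{eq:sumx} that $\sum_{\sett{X}} \pcaus = 1$, even defines a probability distribution over $\vc{z}$. Confirming that $\vc{k}$ is genuinely a deterministic function of $(\vc{u}, \sett{X})$, so that the typical-set indicator can be pulled inside the sum cleanly, is the other place where the adaptive structure, rather than a simpler product channel, demands attention.
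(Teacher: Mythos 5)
Your proposal is correct and follows essentially the same route as the paper's own proof: split $\pr(\suc)$ over the typical set $\typ$ and its complement, bound the atypical part by $\pr\left((\vc{K},\vc{Y})\notin\typ\right)$, and on the typical part substitute the factorization of Lemma \ref{lem:jointprob}, apply the defining inequality of $\typ$ to trade $\prod_t P(y_t|k_t)$ for $\exp(T\beta)\qrv{Y}{y}$, drop the typicality restriction, and identify the remaining sum as $\sum_{\vc{z}}\prv{U}{z}\qst{z}$. The points you flag as delicate (that $\pcaus$ does not depend on $\vc{u}$, and that $\vc{k}=\vc{u}^T\sett{X}$ is deterministic given $(\vc{u},\sett{X})$) are exactly the facts the paper relies on implicitly, so no gap remains.
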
 
\begin{proof} See Section \ref{sec:proofbasic}. \end{proof}
Note that by \eqref{eq:sumx}, we know that $\qst{z}$ is a probability mass function since:
\begin{eqnarray}
\sum_{\vc{z} \in \{ 0,1 \}^N} \qst{z} & = & \sum_{ \vc{y} , \sett{X} } 
 \pcaus \qrv{Y}{y}  \sum_{\vc{z} \in \{ 0,1 \}^N}  \prv{Z|Y, \mat{X}}{z|y,\sett{X}}  \nonumber \\
& = & \sum_{ \vc{y} \in \{0,1 \}^T} \qrv{Y}{y}
 \sum_{\sett{X} }  \pcaus  =   \sum_{ \vc{y} \in \{ 0,1 \}^T} \qrv{Y}{y} = 1.   \label{eq:qsumx} 
\end{eqnarray}
We use Proposition \ref{prop:basic} to 
 prove a result which extends Theorem \ref{thm:bja} for general defective set distributions  $\prV{U}$
in the noiseless binary case. This result applies to  both adaptive and non-adaptive group testing.
\begin{theorem} \label{thm:noiseless}
 For noiseless adaptive binary group testing,  if we write $\curP{m}$ for the sum of the largest $m$ values of $\prv{U}{z}$ then
$$ \pr(\suc) \leq \curP{2^T}.$$
\end{theorem}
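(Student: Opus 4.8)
The plan is to specialise Proposition~\ref{prop:basic} to the noiseless case by choosing the uniform reference distribution $\qrv{Y}{y} \equiv 1/2^T$ together with a $\beta$ for which $\exp(T\beta) = 2^T$, and to observe that the tail term then vanishes. Indeed, noiselessness means $\vc{Y} = \vc{X}$, so $P(y_t \mid k_t) = \II\bigl(y_t = \II(k_t \geq 1)\bigr)$ and the likelihood ratio $\prod_{t=1}^T P(y_t \mid k_t)/\qrv{Y}{y}$ defining $\typ$ equals $2^T$ when $\vc{y} = \vtheta(\vc{u})$ is consistent with $\vc{k}$ and equals $0$ otherwise. Either way this ratio is at most $\exp(T\beta) = 2^T$, so \emph{every} pair $(\vc{k},\vc{y})$ lies in $\typ$ and $\pr\bigl((\vc{K},\vc{Y}) \notin \typ\bigr) = 0$. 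Proposition~\ref{prop:basic} therefore collapses to
\begin{equation} \label{eq:reduce}
\pr(\suc) \leq 2^T \sum_{\vc{z} \in \{0,1\}^N} \prv{U}{z}\,\qst{z} = \sum_{\vc{z} \in \{0,1\}^N} \prv{U}{z}\, g(\vc{z}),
\end{equation}
where $g(\vc{z}) := \sum_{\vc{y},\sett{X}} \prv{Z|Y,\mat{X}}{z|y,\sett{X}} \pcaus = 2^T \qst{z}$, and the computation behind \eqref{eq:qsumx} (with the weight $\qrv{Y}{y}$ removed) gives $\sum_{\vc{z}} g(\vc{z}) = 2^T$ by \eqref{eq:sumx}. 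It remains to bound the right-hand side of \eqref{eq:reduce} by $\curP{2^T}$.

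The difficulty relative to the non-adaptive Example~\ref{ex:noiseless} is that the test matrix now depends on the outputs through $\pcaus$, so there is no single fixed map $\vtheta(\cdot,\sett{X})$. The key step is to reduce to a \emph{deterministic} adaptive strategy. Since $\pcaus = \prod_{t=1}^T \prv{\mat{X}_t \mid \vc{Y}_{1,t-1}, \mat{X}_{1,t-1}}{\sett{X}_t \mid \vc{y}_{1,t-1}, \sett{X}_{1,t-1}}$ is a product of per-stage conditional distributions, the quantity $\sum_{\vc{z}} \prv{U}{z}\, g(\vc{z})$ is affine in each stage's test-selection distribution; equivalently, by the standard equivalence of behavioural and mixed strategies, $\pcaus$ is a mixture of deterministic decision-tree strategies. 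By linearity it therefore suffices to prove $\sum_{\vc{z}} \prv{U}{z}\, g(\vc{z}) \leq \curP{2^T}$ for a deterministic strategy.

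For such a strategy the test matrix is a function $\sett{X}(\vc{y})$ of the observed outputs (read off the decision tree), and running the tree against a true defectivity vector $\vc{u}$ produces a unique output $\vc{y}(\vc{u})$. A short induction on $t$ shows $\vtheta(\vc{u},\sett{X}(\vc{y})) = \vc{y}$ if and only if $\vc{y}(\vc{u}) = \vc{y}$, so the fibres $A(\vc{y}) := \{\vc{u} : \vc{y}(\vc{u}) = \vc{y}\}$ \emph{partition} $\{0,1\}^N$ into at most $2^T$ blocks. With this partition the argument mirrors \eqref{eq:matches}--\eqref{eq:counting}: for fixed $\vc{y}$ a correct decoding forces $\vc{z} \in A(\vc{y})$ (any $\vc{z} \notin A(\vc{y})$ is a certain error in the noiseless case), so on the decoder's support $\prv{U}{z} \leq \pmax{y} := \max_{\vc{z} \in A(\vc{y})} \prv{U}{z}$, giving $\sum_{\vc{z}} \prv{U}{z}\, g(\vc{z}) \leq \sum_{\vc{y}} \pmax{y} = \sum_{\vc{y}} \prv{U}{\vc{u}^*(\vc{y})}$. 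Because the blocks $A(\vc{y})$ are disjoint the maximisers $\vc{u}^*(\vc{y})$ are distinct, so the last sum runs over at most $2^T$ distinct values of $\prv{U}{\cdot}$ and is thus at most $\curP{2^T}$.

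I expect the main obstacle to be exactly the passage to a deterministic strategy and the resulting partition. In the non-adaptive setting the sets $A(\vc{y}) = \vtheta^{-1}(\vc{y})$ are automatically disjoint because $\vtheta(\cdot,\sett{X})$ is one fixed map; under adaptivity a randomised strategy can send a single defectivity vector to several outputs under different test-matrix realisations, which would break the distinctness of the $\vc{u}^*(\vc{y})$ and inflate the bound beyond $\curP{2^T}$. Reducing to a deterministic decision tree is precisely what restores a genuine function $\vc{u} \mapsto \vc{y}(\vc{u})$, and hence the partition; confirming that this reduction is lossless via the affine (mixture) structure of $\pcaus$, and that the induction identifying $A(\vc{y})$ with the tree's fibres is valid, is the crux. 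Once the partition is in place, the remaining counting is identical to the non-adaptive Example~\ref{ex:noiseless}.
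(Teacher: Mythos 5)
Your proof is correct and follows essentially the same route as the paper: specialise Proposition \ref{prop:basic} with $\qrv{Y}{y}\equiv 2^{-T}$ and $\beta=\ln 2$ so that the atypicality term $\pr((\vc{K},\vc{Y})\notin\typ)$ vanishes, then run the counting argument of Example \ref{ex:noiseless} over the fibres $A(\vc{y})$ and use \eqref{eq:sumx} to remove the sum over $\sett{X}$. Your explicit reduction to deterministic decision trees, together with the induction showing the fibres partition $\{0,1\}^N$, is a welcome sharpening of a step the paper treats only implicitly --- the paper's displayed chain drops the $\sett{X}$-dependence of $\pmax{y,\sett{X}}$ and asserts distinctness of the $\vc{u}^*(\vc{y})$ via a fixed map $\vtheta(\cdot,\sett{X})$, which is exactly the point your mixture-of-deterministic-strategies argument makes rigorous for randomised adaptive designs.
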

\begin{proof} See Section \ref{sec:proofnoiseless}. \end{proof}
For combinatorial group testing, since $\prV{U}$ is uniform on a set of size $\binom{N}{K}$, Theorem \ref{thm:noiseless} implies that
$\curP{m} = m/\binom{N}{K}$ and we recover Theorem \ref{thm:bja}. We show how sharp this result is in Figure \ref{fig:bja}, which is reproduced from \cite[Figure 1]{johnsonc10}.
\begin{figure}[h]
\begin{center}
\includegraphics[width=12cm]{./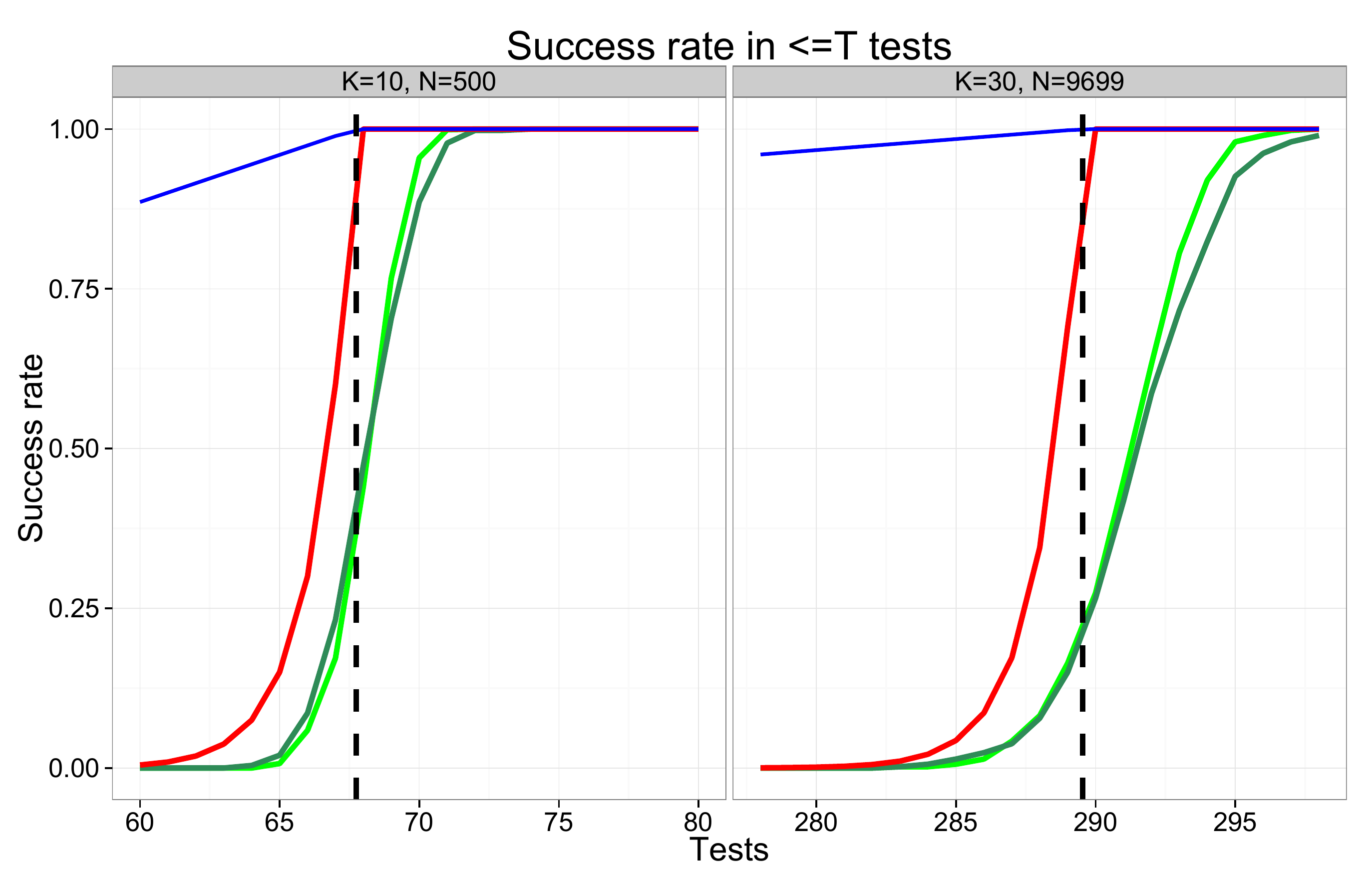}
\caption{(Reproduced from  \cite[Figure 1]{johnsonc10}).
Success probability for noiseless adaptive combinatorial group testing with
$(K,N) = (10, 500)$ and $(30, 9699)$ (these numbers are chosen to match the regime $K = N^{1-\beta}$, as in
Figure  \ref{fig:bscnonadaptiverates}).
 The upper bound on success probability
of Theorem \ref{thm:bja} is plotted in red, and the upper bound \eqref{eq:chan} (from  \cite[Equation (6)]{chan}) in
blue. The dotted vertical line is at $\log_2 \binom{N}{K}$ (the magic number).  To illustrate
how sharp Theorem \ref{thm:bja} is, we compare this with simulated (empirical) results of practical algorithms.
The empirical success probability of the
HGBSA of  Hwang \cite{hwang} is plotted as a bright green line, and the related algorithm analysed
in  \cite[Section IV]{johnsonc10} is plotted in dark green. 
 \label{fig:bja}}
\end{center}
\end{figure}
\begin{corollary} \label{cor:inforate}
Consider a sequence $\vc{U}^{(i)}$ of defectivity vectors of length $i$, generated as independent realisations of a stationary ergodic stochastic process of entropy rate $H$.
Given $T^{(i)} = (H-\epsilon) i$ tests to solve the $i$th noiseless adaptive group testing problem, the success probability tends to zero. (Hence the strong capacity cannot
be more than 1).
\end{corollary}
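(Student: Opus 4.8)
The plan is to combine Theorem \ref{thm:noiseless} with the Shannon--McMillan--Breiman theorem (the asymptotic equipartition property). First I would apply Theorem \ref{thm:noiseless} with $T = T^{(i)} = (H-\epsilon)i$ tests, which immediately gives $\pr(\suc) \leq \curPi{2^{T^{(i)}}} = \curPi{2^{(H-\epsilon)i}}$; it therefore suffices to show that the sum of the largest $m_i := 2^{(H-\epsilon)i}$ values of $P_{\vc{U}^{(i)}}$ tends to zero. (Integer rounding of $T^{(i)}$ is harmless and I would suppress it.)

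Next I would use the fact that $\vc{U}^{(i)}$ has the law of the length-$i$ restriction of a stationary ergodic process of entropy rate $H$. Fixing any $\delta$ with $0 < \delta < \epsilon$, the Shannon--McMillan--Breiman theorem gives $-\tfrac{1}{i}\log_2 P_{\vc{U}^{(i)}}(\vc{U}^{(i)}) \to H$ almost surely, and hence in probability. Consequently the typical set $\mathcal{A}_\delta^{(i)} := \{ \vc{u} : 2^{-i(H+\delta)} \leq P_{\vc{U}^{(i)}}(\vc{u}) \leq 2^{-i(H-\delta)} \}$ satisfies $\pr(\vc{U}^{(i)} \in \mathcal{A}_\delta^{(i)}) \to 1$ as $i \tends$.

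Then I would let $S_i$ denote the set of the $m_i$ most probable outcomes, so that $\curPi{m_i} = \sum_{\vc{z} \in S_i} P_{\vc{U}^{(i)}}(\vc{z})$, and split this as $\sum_{\vc{z} \in S_i \cap \mathcal{A}_\delta^{(i)}} P_{\vc{U}^{(i)}}(\vc{z}) + \sum_{\vc{z} \in S_i \setminus \mathcal{A}_\delta^{(i)}} P_{\vc{U}^{(i)}}(\vc{z})$. The second (atypical) sum is at most $\pr(\vc{U}^{(i)} \notin \mathcal{A}_\delta^{(i)})$, which tends to zero by the previous step. For the first (typical) sum, each term obeys the uniform bound $P_{\vc{U}^{(i)}}(\vc{z}) \leq 2^{-i(H-\delta)}$ and there are at most $m_i = 2^{(H-\epsilon)i}$ of them, so this sum is at most $2^{(H-\epsilon)i} \cdot 2^{-i(H-\delta)} = 2^{-i(\epsilon-\delta)}$, which tends to zero since $\delta < \epsilon$. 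Adding the two estimates gives $\curPi{2^{(H-\epsilon)i}} \to 0$, hence $\pr(\suc) \to 0$; since this holds for every $\epsilon > 0$, the strong capacity is at most $1$.

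The main obstacle is the correct invocation and use of the Shannon--McMillan--Breiman theorem: one must take $\delta$ strictly between $0$ and $\epsilon$ so that the number $m_i$ of retained sequences is exponentially smaller than the reciprocal $2^{i(H-\delta)}$ of the per-sequence typical probability, which is exactly what makes the typical contribution decay. Everything else is a routine two-part estimate; the only points needing care are keeping the logarithms in base $2$ (so that $2^{Hi}$, not $e^{Hi}$, is the relevant scale) and observing that the sum of the largest $m_i$ probabilities is controlled by any set of size $m_i$ together with the mass lying outside the typical set.
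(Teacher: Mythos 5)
Your proposal is correct and follows essentially the same route as the paper: apply Theorem \ref{thm:noiseless}, invoke the Shannon--McMillan--Breiman theorem to get a high-probability typical set, and split the sum of the $2^{T(i)}$ largest probabilities into the atypical mass (which vanishes) and at most $2^{(H-\epsilon)i}$ typical terms each of size at most $2^{-i(H-\delta)}$. The paper simply fixes $\delta=\epsilon/2$ where you keep a general $\delta\in(0,\epsilon)$; the arguments are otherwise identical.
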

\begin{proof} 
We define the typical set
\begin{equation} \label{eq:typsmb}
\typsmb = \left\{ \left| \frac{ - \log \prv{U^{(i)}}{u}}{i} - H   \right| \leq \frac{\epsilon}{2} \right\} \end{equation}
By the Shannon-McMillan-Breiman theorem (AEP) (see for example \cite[Theorem 15.7.1]{cover}),
the probability $\pr(\typsmb) \rightarrow 1$.
Then, in Theorem \ref{thm:noiseless}, the $2^{T(i)}$ strings of largest probability will certainly be contained in a list containing the elements of 
$ (\typsmb)^c$  and the $2^{T(i)}$ strings
of largest probability in $\typsmb$. Since, by definition, any string in $\typsmb$ has probability less than $2^{- i H + i \epsilon/2}$, we deduce that
\begin{eqnarray*}
\pr(\suc) \leq  \curPi{2^{T(i)}} &  \leq & \pr \left( (\typsmb)^c \right) + 2^{T(i)} 2^{- i H + i \epsilon/2} \\
& = & \pr \left( (\typsmb)^c \right) +  2^{-  i \epsilon/2} 
\end{eqnarray*}
Given a quantitative form of the Shannon-McMillan-Breiman theorem (proved for example using the concentration inequalities described in \cite{raginsky}), we can deduce an explicit
(exponential) rate of convergence to zero of $\pr(\suc)$.
\end{proof}
We give more explicit bounds which show how  Theorem \ref{thm:noiseless}
can be applied in the noiseless probabilistic case in Section \ref{sec:noiselessexamples} below.
Section \ref{sec:adaptivenoisy} contains an illustrative
 example of  results that can be proved using Proposition \ref{prop:basic},
in the noisy adaptive case where $\vc{U}$ is uniformly distributed on a set $\sett{M}$ of size $M$ and  there
is a binary symmetric channel with error probability $p$ between $\vc{X}$ and $\vc{Y}$.
 Baldassini's thesis \cite{baldassini} developed and analysed algorithms 
in the noisy adaptive case. However, it remains an open problem to find capacity-achieving algorithms,
 even for examples such as the binary symmetric channel.
\section{Noiseless adaptive probabilistic group testing} \label{sec:noiselessexamples}
In this section, we give examples of bounds which can be proved using Theorem \ref{thm:noiseless} for noiseless adaptive
probabilistic
group testing. Note the similarity between the calculations in Examples \ref{ex:bsc} and \ref{ex:idprob}; in the former case
we control concentration of channel probabilities, in the latter we control source probabilities (see also  \cite[Theorem 35]{polyanskiy2}). Note that the control of the source strings with highest probabilities is an operation that lies at the analysis
of the finite blocklength data compression problem in \cite{kostina}.
\subsection{Noiseless probabilistic group testing with identical $p$}
\begin{example} \label{ex:idprob}
We consider the identical Probabilistic case, where $p_i \equiv p < 1/2$, so $\prv{U}{z} = p^w (1-p)^{N-w}$, where $w = w(\vc{z})$ is the Hamming
weight of $\vc{z}$. Write
\begin{equation} \label{eq:elnt}
 \elnt := \min \left\{ L: \sum_{i=0}^{L} \binom{N}{i} \geq 2^T \right\}
\end{equation}
and define $s \geq 0$ via
\begin{equation} \label{eq:ldef} 2^T = \sum_{i=0}^{\elnt-1} \binom{N}{i} + s, \end{equation}
meaning the $2^T$ highest probability defective sets are all of those of weight $\leq \elnt -1$, plus $s$ of
weight $\elnt$. We evaluate $\curP{2^T}$ in this case to obtain a bound which we plot in Figure \ref{fig:bernoulli}:
\begin{equation} \label{eq:curpval}
\pr(\suc) \leq
 \curP{2^T} = \sum_{i=0}^{\elnt-1} \binom{N}{i} p^{i} (1-p)^{N-i} + s  p^{\elnt} (1-p)^{N-\elnt}.\end{equation}
\end{example}

\begin{figure}[h]
\begin{center}
\includegraphics[width=12cm]{./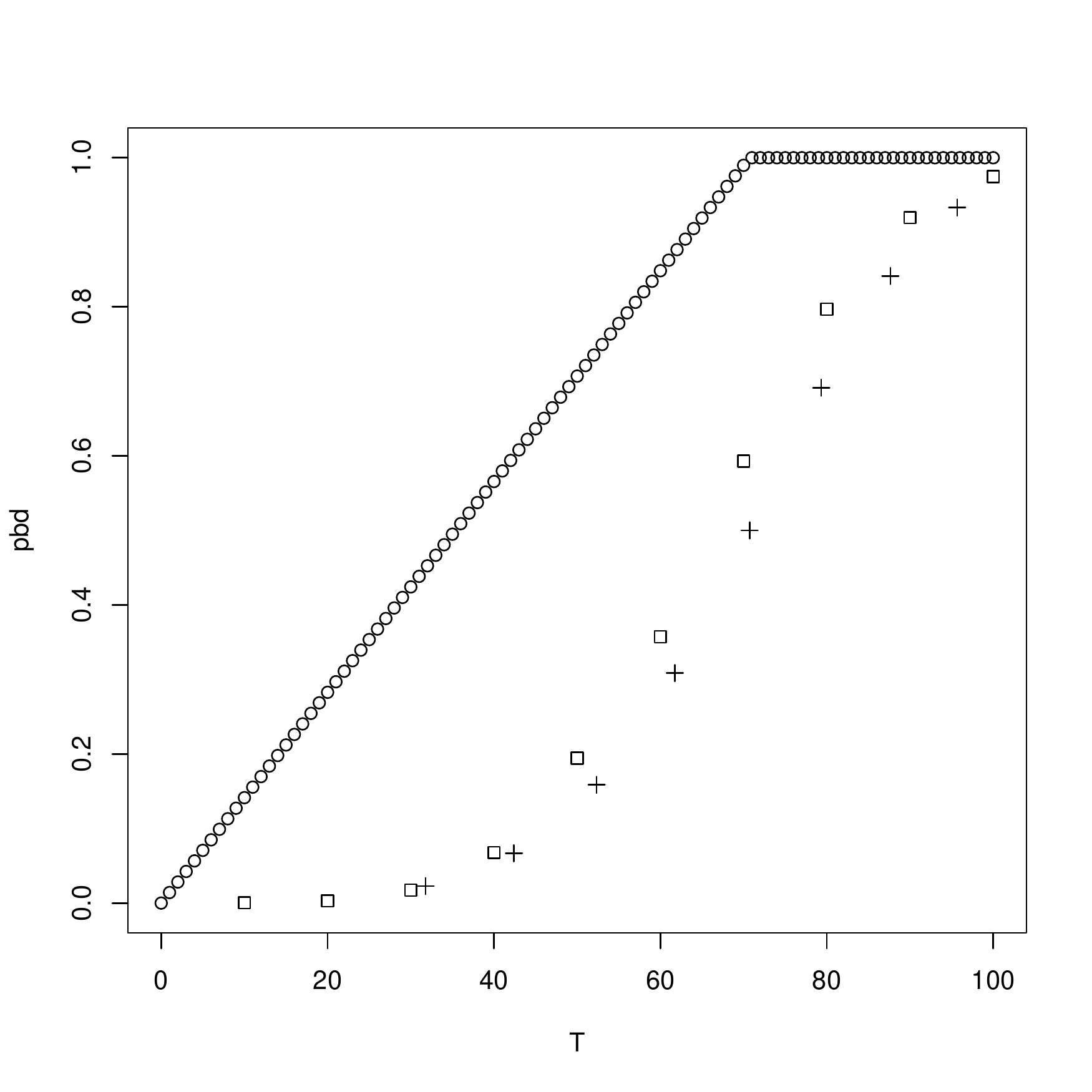}
\caption{Noiseless probabilistic adaptive group
testing in the case of $p_i \equiv 1/50$, $N = 500$. We vary the number of tests $T$ between 0 and 100, and plot the success
probability on the $y$ axis.
We plot the upper bound on $\pr(\suc)$ given by \eqref{eq:curpval} using $\square$.
For comparison, we plot the (weaker) Fano bound \eqref{eq:li} of \cite{li5} as $\circ$.
The approximation \eqref{eq:approxnorm} is plotted as $+$. \label{fig:bernoulli}}
\end{center}
\end{figure}

\begin{remark} We give a Gaussian approximation to the bound \eqref{eq:curpval}, in the spirit of \cite{polyanskiy2}.
Since we need to control tail probabilities we use the approximation given by 
Chernoff bounds (see Theorem \ref{thm:chernoff}).
In particular, if we take $L = L(y): = N p + y \sqrt{N p(1-p)}$ and $T(y)= N h(L(y)/N)$ then \eqref{eq:chernoff2} gives that
\begin{equation} 
\pr( \bin(N,1/2) \leq L(y)) \simeq 2^{ - N +T(y)},
\end{equation}
giving an approximate solution to \eqref{eq:elnt} as required.
Substituting  in \eqref{eq:curpval} we obtain
\begin{eqnarray}  \label{eq:approxnorm}
\pr(\suc) \leq \curP{2^{T(y)}}  & \simeq &  \pr( \bin(N, p) \leq L(y)) \simeq \Phi(y),
\end{eqnarray}
using a second normal approximation.
 For example, if  $y=0$ then $T = T(0) = N h(p)$ (the magic number) and
$L = N p$, and $\curP{2^T} \simeq \Phi(0) = 1/2$.
\end{remark}
Indeed using the Chernoff bound, we  use \eqref{eq:approxnorm} to deduce a strong capacity result:
\begin{corollary} Noiseless binary probabilistic group testing has strong
capacity $C=1$ in any regime where $p \rightarrow 0$ and $N p \rightarrow \infty$. \end{corollary}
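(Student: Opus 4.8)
The plan is to establish both halves of Definition \ref{def:capacity} for $C=1$: the strong converse (that any sequence of algorithms with $\liminf_{i\tends} H(\vc{U}^{(i)})/T(i) \geq 1+\epsilon$ has $\pr(\suc)\to 0$), and the achievability (that rate $1-\epsilon$ is attainable with $\pr(\suc)\to 1$). The converse is the substantive part and is exactly what \eqref{eq:approxnorm} is designed to deliver; for achievability I would appeal to the known capacity-$1$ adaptive algorithms.

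For the converse I would work with the parametrisation behind \eqref{eq:approxnorm}. Writing $L(y) = Np + y\sqrt{Np(1-p)}$ and $T(y) = N h(L(y)/N)$, a first-order Taylor expansion of $h$ about $p$, using $h'(p) = \log_2\frac{1-p}{p} = \log_2(1/p)(1+o(1))$ as $p\to 0$, gives $T(y) = N h(p) + y\,\log_2(1/p)\sqrt{Np(1-p)}\,(1+o(1))$. Dividing, and using $h(p) = p\log_2(1/p)(1+o(1))$, the rate becomes $\frac{N h(p)}{T(y)} = \left(1 + y\sqrt{\tfrac{1-p}{Np}}\,(1+o(1))\right)^{-1}$, which tends to $1$ for every fixed $y$ because $Np\to\infty$. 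Consequently a sequence of problems with rate bounded below by $1+\epsilon$ forces the associated parameter $y=y(i)$ to satisfy $y(i)\to-\infty$, and then Theorem \ref{thm:noiseless} together with \eqref{eq:approxnorm} gives $\pr(\suc)\leq \Phi(y(i))+o(1)\to 0$.

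To turn the heuristic $\simeq$ in \eqref{eq:approxnorm} into a rigorous bound I would replace it by two applications of the Chernoff estimate \eqref{eq:chernoff2} of Theorem \ref{thm:chernoff}. First, starting from $\pr(\suc)\leq \curP{2^T}$ (Theorem \ref{thm:noiseless}) and its explicit form \eqref{eq:curpval}, I have $\pr(\suc)\leq \pr(\bin(N,p)\leq \elnt)$, with $\elnt$ defined by \eqref{eq:elnt}. A Chernoff lower bound on the counting sum $\sum_{i\leq \elnt}\binom{N}{i}$ locates $\elnt$: the rate condition $T\leq N h(p)/(1+\epsilon)$ forces $h(\elnt/N)\leq h(p)/(1+\epsilon)$ up to lower-order terms, hence $\elnt/N \leq q$ where $h(q)=h(p)/(1+\epsilon)$, and crucially $q<p$ so that the threshold lies strictly below the mean $Np$. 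A matching Chernoff upper bound then yields $\pr(\bin(N,p)\leq \elnt)\leq \exp\!\big(-N D(\elnt/N\,\|\,p)\big)$, and a short computation gives $q=\frac{p}{1+\epsilon}(1+o(1))$ together with $D(q\,\|\,p)\geq c(\epsilon)\,p\,(1+o(1))$ for a positive constant $c(\epsilon)$. Since $Np\to\infty$, the exponent satisfies $N D(q\,\|\,p)\gtrsim c(\epsilon)Np\to\infty$, so $\pr(\suc)$ decays to zero, in fact exponentially in $Np$.

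Achievability of a rate approaching $1$ I would obtain by conditioning on the random number of defectives $K=\sum_i U_i\sim\bin(N,p)$, which concentrates on $Np$, and running a capacity-$1$ adaptive combinatorial algorithm such as Hwang's HGBSA (cf.\ the discussion following Definition \ref{def:capacity}); this succeeds with high probability using $(1+o(1))\log_2\binom{N}{K}=(1+o(1))N h(p)$ tests. The main obstacle is the uniform control of every approximation in the simultaneous limit $p\to 0,\ Np\to\infty$: both $h(p)$ and the divergence $D(q\,\|\,p)$ vanish, so I must show that the polynomial factors ($N+1$, $\log N$) produced by the binomial-coefficient and Chernoff bounds are genuinely negligible against $h(p)\sim p\log_2(1/p)$, and that the inversion of the transcendental relation $h(q)=h(p)/(1+\epsilon)$ locating $\elnt$ carries a uniformly small error. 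These bookkeeping issues, rather than any single inequality, are where the real care is needed.
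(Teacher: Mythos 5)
Your proposal is correct and follows essentially the same route as the paper's sketch proof: reduce via Theorem \ref{thm:noiseless} to bounding $\curP{2^T}$, show the top $2^T$ strings all have Hamming weight below a threshold strictly less than $Np$ by counting binomial coefficients, and apply the Chernoff bound of Theorem \ref{thm:chernoff} to the resulting lower tail of $\bin(N,p)$. Your multiplicative parametrisation $q = p/(1+\epsilon)(1+o(1))$ with $D(q\,\|\,p)\gtrsim c(\epsilon)p$ is in fact better adapted to the regime $p\to 0$ than the paper's additive offset $p-\epsilon$, and your explicit attention to achievability and to the $\log(N+1)$ bookkeeping addresses points the paper delegates to \cite{johnsonc12} or leaves implicit in a sketch.
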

\begin{proof}[Sketch proof]
For any $p \leq 1/2$ and $\epsilon > 0$, we  consider the asymptotic regime where $T = N h(p-\epsilon)$ as $N \rightarrow \infty$.
Choosing $L = N(p-\epsilon/2)$, we know that using standard bounds (see for example \cite[Equation (12.40)]{cover})
$$ \sum_{i=0}^L \binom{N}{i}  \geq  \binom{N}{L} \geq \frac{2^{Nh(L/N)}}{N+1} \geq \frac{2^{N h(p-\epsilon/2)}}{N+1},$$
which is larger than $2^T$ in the asymptotic regime.
Hence, summing over the strings of weight $\leq L$ will give at least the $2^T$ strings of highest probability, and we deduce 
by Theorem \ref{thm:chernoff} that
$$ \pr(\suc)  \leq \pr( \bin(N,p) \leq N(p - \epsilon/2)) \leq 2^{- N D(p -\epsilon/2 \| p)},$$
which tends to zero exponentially fast.
This complements the performance guarantee proved  in \cite{johnsonc12}, 
strengthening the result of \cite[Corollary 1.5]{johnsonc12} where the corresponding weak capacity result was stated 
using \eqref{eq:li}.
\end{proof}
\subsection{Noiseless probabilistic group testing with non-identical $p$}
In the case of probabilistic group testing, for non-identical $p_i$, the analysis is more complicated, and the form of 
the tightest bounds depends on the distribution of values of $p_i$. 
We assume that 
$1/2 \geq p_1 \geq p_2 \geq \ldots \geq p_N$ and write 
$\zeta_i = \log( (1-p_i)/p_i) \geq 0$. 
For a given value of $c$, we write $\sett{U}_c$ for the collection of defective sets with probability $\geq c$. That is
\begin{eqnarray}
\sett{U}_c & = & \{ \vc{u}: \pr(\vc{U} = \vc{u}) \geq c \} 
= \left\{ \vc{u}:  \sum_{i=1}^N (p_i - u_i) \zeta_i \geq \log c +  h(U) \right\}.
\end{eqnarray}
The key idea is to find a value $c$, 
such that we can guarantee that  $|\sett{U}_c| \geq 2^T$  (we discuss how to do this in Lemma \ref{lem:findc} below). Then, we  use concentration inequalities
to bound the total probability $\pr(|\sett{U}_c|)$ from above (this is done in Lemma \ref{lem:tailbound}). Then by construction  we know that 
$\curP{2^T} \leq \pr(|\sett{U}_c|)$, and we deduce an upper bound on the success probability, stated in Theorem \ref{thm:final}.
The details are given in Section \ref{sec:prooffinal}.
\begin{theorem} \label{thm:final}
Non-identical Probabilistic group testing has  success probability bounded by 
$$ \pr(\suc) \leq  \exp \left( -  \frac{(\log c^* + h( \vc{U})^2}{4 L} \right),  \mbox{ if $0 \leq (\log c^* + h( \vc{U})) \leq 
\frac{L}{M}$,} $$
where
\begin{equation} \label{eq:landm}
L = \sum_{i=1}^N p_i (1-p_i) \left( \log \left( \frac{1-p_i}{p_i} \right) \right)^2,
\mbox{\;\; and \;\;} M = \max_i  (1-p_i) \log \left( \frac{1-p_i}{p_i} \right), \end{equation}
 and we write
$$ c^*  = \max_{T \leq R \leq N} \left( \prod_{i=1}^{R-\elrt} (1-p_i) \prod_{i=R-\elrt+1}^R p_i \right). $$
\end{theorem}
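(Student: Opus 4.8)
The plan is to reduce everything to Theorem \ref{thm:noiseless}, which gives $\pr(\suc) \le \curP{2^T}$, and then to bound the tail mass $\curP{2^T}$ (the total probability of the $2^T$ most probable defectivity vectors) by a concentration argument. The bridge between these two objects is the level set $\sett{U}_c$: whenever a threshold $c$ can be chosen so that $|\sett{U}_c| \ge 2^T$, I claim that automatically $\curP{2^T} \le \pr(\vc{U} \in \sett{U}_c)$. Indeed, if at least $2^T$ vectors have probability $\ge c$, then the $2^T$-th largest probability is itself $\ge c$, so the $2^T$ most probable vectors all lie inside $\sett{U}_c$; comparing the sum of their probabilities with the full mass of $\sett{U}_c$ gives the inequality. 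Thus the problem splits into the two tasks flagged just before the statement: (i) find an explicit admissible $c$, and (ii) bound $\pr(\vc{U} \in \sett{U}_c)$ from above.

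For task (ii), the crucial point is that the defining inequality of $\sett{U}_c$ rewrites as $S \ge \log c + h(\vc{U})$, where $S := \sum_{i=1}^N (p_i - U_i)\zeta_i$. Under the source law each $U_i$ is an independent $\mathrm{Bernoulli}(p_i)$, so $\ep S = 0$ and $\mathrm{Var}(S) = \sum_i p_i(1-p_i)\zeta_i^2 = L$. Moreover each centred summand $(p_i - U_i)\zeta_i$ takes the two values $p_i \zeta_i$ and $-(1-p_i)\zeta_i$, whose larger magnitude is $(1-p_i)\zeta_i$ since $p_i \le 1/2$; hence the summands are bounded by $M = \max_i (1-p_i)\zeta_i$. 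This is exactly the setting of a Bernstein-type tail bound, and feeding the variance $L$ and the boundedness parameter $M$ into the concentration inequality of Lemma \ref{lem:tailbound} yields $\pr(S \ge t) \le \exp(-t^2/(4L))$ in the moderate-deviation range $0 \le t \le L/M$. Taking $t = \log c + h(\vc{U})$ produces the stated exponential, and the hypothesis $0 \le \log c^* + h(\vc{U}) \le L/M$ is precisely the requirement that $t$ lie in this admissible range.

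For task (i), I would invoke Lemma \ref{lem:findc} to certify that the explicit quantity $c^*$ is admissible, i.e. $|\sett{U}_{c^*}| \ge 2^T$. Here the combinatorial content enters: since $p_1 \ge \cdots \ge p_N$ and all $p_i \le 1/2$, adding a defective at index $i$ multiplies a vector's probability by $p_i/(1-p_i) \le 1$, so a vector is more probable when it has fewer defectives and when those defectives sit on low-index (high-$p_i$) coordinates. The cheapest way to exhibit $2^T$ high-probability vectors is therefore to take all vectors supported on a prefix $\{1,\dots,R\}$ of weight at most $\elrt$, where $\elrt$ is the smallest weight making this prefix family exceed $2^T$ members; its least probable member puts its $\elrt$ defectives on the largest indices $R-\elrt+1,\dots,R$, whose probability is the product inside the maximum defining $c^*$. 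Maximising over the free parameter $R$ then selects the largest (hence best) admissible threshold. Chaining (i) and (ii) through the monotonicity bridge gives $\pr(\suc) \le \curP{2^T} \le \pr(\vc{U} \in \sett{U}_{c^*}) \le \exp(-(\log c^* + h(\vc{U}))^2/(4L))$, which is the claim.

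The main obstacle is task (i). The identity $\mathrm{Var}(S) = L$ and the Bernstein step in task (ii) are essentially bookkeeping, but showing that the explicit formula for $c^*$ really certifies $|\sett{U}_{c^*}| \ge 2^T$ requires the correct ordering of defective sets by probability together with the right prefix-and-weight counting and the optimisation over $R$. Getting the indices and the definition of $\elrt$ to align so that the worst-case member of the counted family has probability exactly $c^*$ is the delicate part, and it is where I expect the real work (and the content of Lemma \ref{lem:findc}) to lie.
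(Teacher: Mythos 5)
Your proposal is correct and follows exactly the route the paper takes: reduce to Theorem \ref{thm:noiseless}, observe that $|\sett{U}_{c}|\geq 2^T$ forces $\curP{2^T}\leq \pr(\vc{U}\in\sett{U}_{c})$, certify $c^*$ via the prefix-and-weight construction of Lemma \ref{lem:findc}, and apply a Bernstein bound to $\sum_i(p_i-U_i)\zeta_i$ with variance $L$ and summands bounded by $M$ as in Lemma \ref{lem:tailbound}. In fact you supply more detail than the paper's own (very terse) lemma proofs, and your identification of the least probable member of the family and of the variance and boundedness parameters is accurate.
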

\begin{proof} See Section \ref{sec:prooffinal}. \end{proof}

\section{Noisy adaptive group testing example} \label{sec:adaptivenoisy}
We now use Proposition \ref{prop:basic} to  prove a bound on $\pr(\suc)$  in a  noisy example.
For simplicity we state the following example in the case of uniform $\vc{U}$.
Further generalizations (in the spirit of Theorem \ref{thm:noiseless}) are possible by adapting the proofs along the lines
of Section \ref{sec:proofnoiseless}. 
\begin{example} \label{ex:adaptivenoisy}
Suppose $\vc{U}$ is uniformly distributed on a set $\sett{M}$ of size $M$.
and the noise channel between $\vc{X}$ and $\vc{Y}$ forms a binary symmetric channel.
Recall that \eqref{eq:basic} states that
\begin{equation} \label{eq:basic3}
 \pr( \suc ) \leq  \exp( T \beta) \sum_{\vc{z} \in \{0,1 \}^N}   \prv{U}{z} \qst{z} +  \pr \left( (\vc{K}, \vc{Y}) \notin \typ \right). 
\end{equation}
We write $C := 1 - h(p) %= \log 2  + p \log p + (1-p) \log(1-p)
$ for the Shannon capacity of the binary symmetric channel, and 
take
$\beta = C + \epsilon/2$.  By \eqref{eq:qsumx}, since $\prv{U}{z} \equiv 1/M$  the first term of
\eqref{eq:basic3} becomes $\exp(T (C + \epsilon/2))/M$.
Since $X_t = \II(K_t \geq 1)$, and the channel matrix $P(y  | k)$ is constant for values of $k \geq 1$, it collapses down
to a channel matrix from $x$ to $y$.
Further, taking $\qrV{Y} \equiv 1/2^T$, for $p < 1/2$,  the event that
\begin{eqnarray}
  \left\{ (\vc{k}, \vc{y}) \notin \typv{C+\epsilon/2} \right\}  & \Longleftrightarrow &  \left\{
\frac{1}{T}  \left( \sum_{i=1}^T \log P(y_i | k_i) - \log Q(y_i) \right) >  C + \frac{\epsilon}{2} \right\}  \nonumber \\
& \Longleftrightarrow &  \left\{ 1 + \frac{d(\vc{x}, \vc{y})}{T} \log p + \left( 1 -  \frac{d(\vc{x}, \vc{y})}{T} \right)
\log(1-p)  >  C + \frac{\epsilon}{2} \right\} \nonumber \\
& \Longleftrightarrow &  \left\{ - 
\left( d(\vc{x}, \vc{y}) - T p \right)  \log \left( \frac{(1-p)}{p} \right) >  
\frac{\epsilon T}{2} \right\} \label{eq:asymp}
\end{eqnarray}
We  deduce results, both in the asymptotic (capacity) sense and the finite blocklength regime.
\begin{enumerate}
\item
Consider a sequence of group testing problems, where the $i$th problem has
$\vc{U}^{(i)}$  uniformly distributed on a set $\sett{M}(i)$ of size $M(i)$, we can deduce that any sequence of algorithms using
$T(i)$ tests has
\begin{equation} \label{eq:lowerval}
        \liminf_{i \tends} \frac{ \log M(i)}{ T(i)} = \frac{ H(\vc{U}^{(i)}) }{T(i)} \geq C+ \epsilon,
      \end{equation}
      has success probability $\pr(\suc) \rightarrow 0$, and hence strong capacity bounded above by $C$. 
This follows by the considerations above, since \eqref{eq:asymp} tells us that \eqref{eq:basic3} becomes
\begin{eqnarray*} \pr(\suc) & \leq & \frac{1}{M} \exp(T (C + \epsilon/2)) + \pr( \bin(T,p) \leq T q) \\
& \leq & \exp(-T \epsilon/2) + 2^{ - T D(q \| p)}
\end{eqnarray*}
where $q = p - \epsilon/(2 \log((1-p)/p))$,
and we deduce (exponential) convergence to zero, using the Chernoff bound Theorem \ref{thm:chernoff} below.

\item For any $d^*$, we can consider the set $ \{ d \leq d^* \}$ which (by \eqref{eq:asymp})
 corresponds to taking $\epsilon T/2 =
- (d^* - T p) \log((1-p)/p)$.
Then
\eqref{eq:basic3} becomes
\begin{eqnarray}
\pr(\suc) & \leq & \min_{d^*} \left[ \frac{1}{M} \exp \left( T C + (d^* - T p) \log \left( \frac{(1-p)}{p} \right) \right) + \pr \left( \bin(T,p) \leq d^*  \right) \right]. \;\;\;\; 
\label{eq:toplotbscadap}
\end{eqnarray}
We illustrate this bound in  Figure \ref{fig:bscadaptive}, where we compare it with the bounds derived in the adaptive case in
Example \ref{ex:bsc}.
\end{enumerate}
\end{example}

\begin{figure}[ht!]
\begin{center}
\includegraphics[width=10cm]{./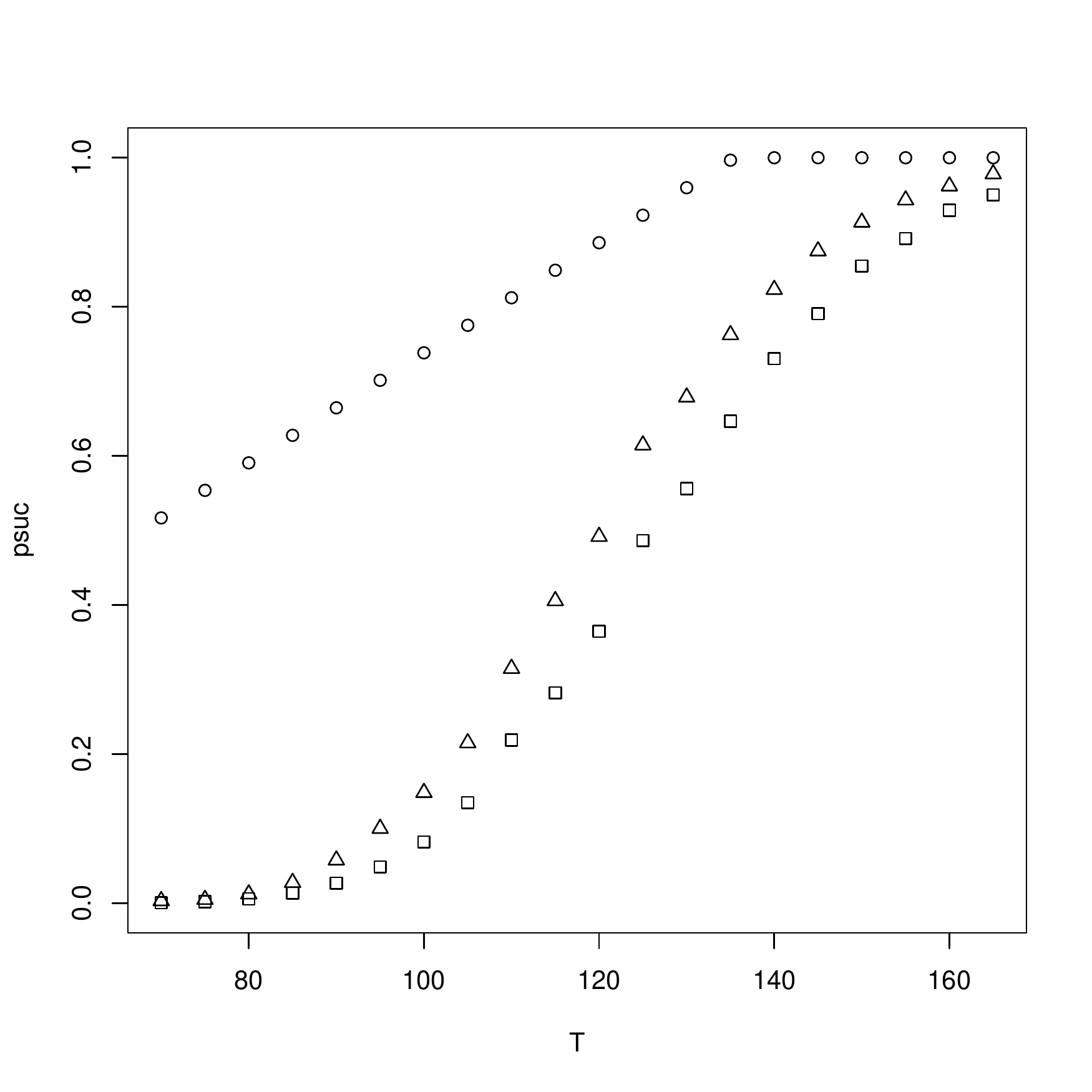}
\caption{Combinatorial non-adaptive 
group testing with $N = 500$ and $K=10$, where the output $\vc{X}$  of standard noiseless group testing  is fed into a
memoryless binary symmetric channel with $p=0.11$.
We vary the number of tests $T$ between 70 and 165, and plot the success
probability on the $y$ axis. 
We plot the bound \eqref{eq:toplotbscadap} (from Example  \ref{ex:adaptivenoisy}) on the success probability for adaptive algorithms 
  as $\triangle$. Since this is exactly the same scenario as Figure \ref{fig:bscnonadaptive}, we 
add the points from that figure for comparison.
That is, we plot the upper bound on $\pr(\suc)$ for non-adaptive algorithms given by Example \ref{ex:bsc} using $\square$,
showing a small adaptivity gap between upper bounds.
Again, we plot the (weaker) Fano bound \eqref{eq:fanocombnoisy}   taken
from \cite{chan} as $\circ$.
 \label{fig:bscadaptive}}
\end{center}
\end{figure}

\clearpage
 
\appendix

\section{Proof of Theorem \ref{thm:ppvmain}} \label{sec:proofppvmain}
We use an argument based on \cite{polyanskiy2}, adapted to the scenario where $\vc{U}$ need not be uniform. Consider a  hypothesis testing problem where we are given a pair $(\vc{k},\vc{y})$ and asked to test the null hypothesis that  it comes from joint distribution $\prV{KY}$ against an alternative of some other specific
 $\qrV{KY}$. This is a counterfactual exercise; in group testing 
we do not know $\vc{K}$, however, it is helpful to imagine a separate user who is  asked
to make inference using this information, and uses
 the following hypothesis testing rule:
\begin{quote}
given pair $(\vc{k},\vc{y})$ send $\vc{y}$ to the decoder to produce $\vc{z}$, and then accept $\prV{KY}$ with probability
\begin{equation} \label{eq:selection} 
 \prv{U|K}{z|k}  = \frac{ \prv{U}{z} \prv{K|U}{k|z} }{\prv{K}{k}}. \end{equation}
\end{quote}
\begin{proof}[Proof of Theorem \ref{thm:ppvmain}]
The key is to notice that $\vc{U} \rightarrow \vc{K} \rightarrow \vc{Y} \rightarrow \vc{Z}$ form a Markov chain, so
for estimation algorithm $\prV{Z|Y}$ we obtain 
\begin{eqnarray*}
\prv{Z|U}{w|z} 
& = & \sum_{\vc{k},\vc{y}} \prv{Z|Y}{w|y} \prv{Y|K}{y|k} \prv{K|U}{k|z}.
\end{eqnarray*}
Using this, there is an equivalence betwen error probability and 
$\pr(\mbox{Type I error})$  since
\begin{eqnarray}
\pr(\suc) & = & \sum_{\vc{z},\vc{w}}  \prv{U}{z}  \prv{Z|U}{w|z} \II(  \vc{w} = \vc{z}) \nonumber \\
& = & \sum_{\vc{z},\vc{w}}  \prv{U}{z}  \left[ \sum_{\vc{k},\vc{y}} \prv{Z|Y}{w|y} \prv{Y|K}{y|k} \prv{K|U}{k|z}
\right] \II(  \vc{w} = \vc{z}) \nonumber \\
& = & \sum_{\vc{k},\vc{y}}  \prv{K}{k}   \prv{Y|K}{y|k} 
\sum_{\vc{z},\vc{w}} \II(  \vc{w} = \vc{z})   \prv{Z|Y}{w|y} \frac{ \prv{U}{z} \prv{K|U}{k|z}}{\prv{K}{k}} 
\nonumber \\
& = & \sum_{\vc{k},\vc{y}}  \prv{KY}{k,y}   
\sum_{\vc{z}}   \prv{Z|Y}{z|y}
\frac{ \prv{U}{z} \prv{K|U}{k|z}}{\prv{K}{k}} \label{eq:todeal} \\
& = & \sum_{\vc{k},\vc{y}}  \prv{KY}{k,y}   \sum_{\vc{z}} \prv{Z|Y}{z|y}  \prv{U|K}{z|k}  \nonumber \\
& = & \sum_{\vc{k},\vc{y}}  \prv{KY}{k,y}  \pr(\mbox{accept $\prV{KY}$ given pair $(\vc{k}, \vc{y})$}) \nonumber \\
& = & 1 - \pr( \mbox{\;Type I error\;}) \nonumber
\end{eqnarray}
where we use the expression \eqref{eq:selection} to deal with \eqref{eq:todeal}.
We  find the probability of a Type II error in the same way. We focus on the case where $\qrV{KY} = \prV{K} \times
 \qrV{Y}$ (so $\vc{K}$ and $\vc{Y}$  are independent under $\qrV{KY}$), where
\begin{eqnarray}
\pr(\mbox{\;Type II error\;})
& = & \sum_{\vc{k},\vc{y}}  \qrv{KY}{k,y}  \pr( \mbox{accept $\prV{KY}$ given pair $(\vc{k}, \vc{y})$}) \nonumber \\
& = & \sum_{\vc{k},\vc{y}}  \prv{K}{k} \qrv{Y}{y} 
\sum_{\vc{z}}   \prv{Z|Y}{z|y}
\frac{ \prv{U}{z} \prv{K|U}{k|z}}{\prv{K}{k}} \nonumber \\
& = & \sum_{\vc{k}, \vc{z}}  \prv{U}{z} \prv{K|U}{k|z}  \sum_{\vc{y}} \qrv{Y}{y}  \prv{Z|Y}{z|y}  \nonumber \\
& = & \sum_{\vc{k}, \vc{z}}  \prv{U}{z} \prv{K|U}{k|z}  \qst{z}  \nonumber \\
& = & \sum_{\vc{z}}  \prv{U}{z} \qst{z}  \label{eq:tocontrol}
\end{eqnarray}
Hence, since $\beta_{1-\epsilon}$ gives the minimum type II error, we deduce the Proposition.
\end{proof}
\section{Proof of Proposition \ref{prop:basic}} \label{sec:proofbasic}
\begin{proof}[Proof of Proposition \ref{prop:basic}]
We write $\intyp$ for the event $\left\{ ( \vc{K}, \vc{Y} ) \in \typ \right\}$ and consider
\begin{eqnarray}
\pr(\suc) & =  &  \pr\left(\suc \bigcap \intyp \right) +  \pr\left(\suc \bigcap \intyp^c \right) \nonumber \\
& \leq & \pr\left(\suc \bigcap \intyp \right) +  \pr\left( \intyp^c \right). \label{eq:innersum} 
\end{eqnarray}
We write $\vc{u}$, $\sett{X}$ and $\vc{y}$ as indices of summation for brevity, to refer to sums over
$\vc{u} \in \{ 0,1 \}^N$, $\sett{X} \in \{0,1 \}^{N \times T}$ and $\vc{y} \in \{0,1 \}^T$.
Using the fact that  for the estimation algorithm $\prV{Z|Y,\mat{X}}$
$$\pr(\suc |  \vc{U} = \vc{u}, \mat{X} = \sett{X}, \vc{Y} = \vc{y}) 
= \sum_{\vc{z}} \prv{Z|Y,\mat{X}}{z|y, \sett{X}} \II ( \vc{z} = \vc{u}),$$ 
the first term of \eqref{eq:innersum} becomes:
\begin{eqnarray*}
\lefteqn{
\sum_{ (\vc{u}^T \sett{X}, \vc{y}) \in \typ } 
 \pr( \suc | \vc{U} = \vc{u} , \mat{X} = \sett{X}, \vc{Y} = \vc{y} ) \prv{U, \mat{X},Y}{u ,\sett{X},y}  }  \hspace*{2cm} \\
& \leq & \exp(T \beta)
\sum_{ (\vc{u}^T \sett{X}, \vc{y}) \in \typ } 
 \pr( \suc | \vc{U} = \vc{u} , \mat{X} = \sett{X}, \vc{Y} = \vc{y} )
\prv{U}{u} \pcaus \qrv{Y}{y} \\
& \leq & \exp(T \beta)
\sum_{ (\vc{u}, \sett{X}, \vc{y}) }  \left[  \sum_{\vc{z}} \prv{Z|Y,\mat{X}}{z|y, \sett{X}} \II ( \vc{z} = \vc{u}) \right]
\prv{U}{u} \pcaus \qrv{Y}{y} \\
& = & \exp(T \beta) \sum_{\vc{z}} \prv{U}{z}
\sum_{ \vc{y}, \sett{X} }   \prv{Z|Y,\mat{X}}{z|y, \sett{X}}  \pcaus
  \qrv{Y}{y}  \\
& = & \exp(T \beta) \sum_{\vc{z}}   \prv{U}{z} \qst{z} 
\end{eqnarray*}
where  we write $\qst{z} = \sum_{ \vc{y}, \sett{X} }  \prv{Z|Y,\mat{X}}{z|y, \sett{X}}  \pcaus
  \qrv{Y}{y}$.
\end{proof}

\section{Proof of Theorem \ref{thm:noiseless}} \label{sec:proofnoiseless}

\begin{proof}[Proof of Theorem \ref{thm:noiseless}]
In general, in the noiseless case, for each defective set $\vc{U}$, we  write $\vc{X} =  \vc{Y} = \vtheta(\vc{U},
\sett{X})$. For a particular $\vc{Y} = \vc{y}$ and $\mat{X} = \sett{X}$, we 
 write $A(\vc{y}, \sett{X}) = \vtheta^{-1}(\vc{y}, \sett{X}) = \{ \vc{z}: \vtheta(\vc{z}, \sett{X}) = \vc{y} \}$ for the defective sets that get mapped to
$\vc{y}$ by the testing procedure defined by $\sett{X}$.
We 
write $\pmax{y, \sett{X}} 
= \max_{\vc{z} \in A(\vc{y}, \sett{X})} \prv{U}{z}
$ for the maximum probability in
$A(\vc{y}, \sett{X})$ and $\sett{U}^*(\vc{y}, \sett{X}) = \{\vc{u}: \prv{U}{u} = \pmax{y, \sett{X}} \}$ for the 
collection of defective sets achieving this probability.
For each $\vc{y}$, pick a string $\vc{u}^*(\vc{y},\sett{X}) \in \sett{U}^*(\vc{y},\sett{X})$ in any arbitrary fashion; and note that there are
up to $2^T$ strings $\vc{u}^*(\vc{y}, \sett{X})$, which are distinct, since they each map to a different value under $\vtheta(\cdot, 
\sett{X})$.
These definitions are illustrated in Figure \ref{fig:setfigure}.

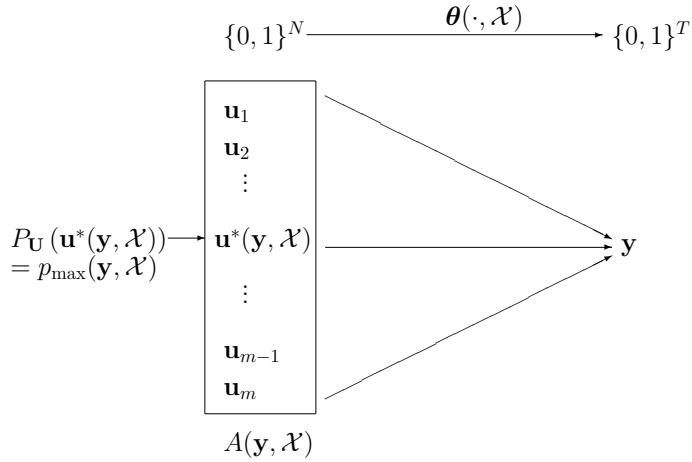
\begin{figure}[ht!]
\begin{large}
\begin{center}
\scalebox{0.7}{
\begin{picture}(400,250)(-20,5)
\put(100,10){$A(\vc{y},\sett{X})$}
\put(100,40){$\vc{u}_m$}
\put(100,60){$\vc{u}_{m-1}$}
\put(110,90){$\vdots$}
\put(95,120){$\vc{u}^*(\vc{y},\sett{X})$}
\put(110,150){$\vdots$}
\put(100,170){$\vc{u}_2$}
\put(100,190){$\vc{u}_1$}
\put(100,230){$\{0,1\}^N$}
\put(90,30){\line(1,0){60}}
\put(90,210){\line(1,0){60}}
\put(90,30){\line(0,1){180}}
\put(150,30){\line(0,1){180}}
\put(-15,120){$\prv{U}{\vc{u}^*(\vc{y},\sett{X})}$}
\put(-15,105){$ = \pmax{y,\sett{X}}$}
\put(70,125){\vector(1,0){20}}
\put(310,230){$\{0,1 \}^T$}
\put(145,235){\vector(1,0){160}}
\put(315,118){$\vc{y}$}
\put(155,120){\vector(1,0){155}}
\put(155,202){\vector(2,-1){155}}
\put(155,38){\vector(2,1){155}}
\put(220,240){$\vd{\theta}(\cdot, \sett{X})$}
\end{picture}}
\caption{Schematic illustration of the sets used in the proof of Theorem \ref{thm:noiseless} \label{fig:setfigure}}
\end{center}
\end{large}
\end{figure}
We use the bound from Proposition \ref{prop:basic},
taking $\qrv{Y}{y} \equiv 1/2^T$ and $\beta = \ln 2$ in \eqref{eq:typset}, so that $\typ = \left\{ (\vc{k}, \vc{y}):
 \prv{Y|K}{y|k} \leq 1 \right\}$, which holds automatically. That is, the set $\typ^c$ is the empty set and so
the second term in \eqref{eq:basic} vanishes. 
We analyse the first term in \eqref{eq:basic} repeating arguments from Example  \ref{ex:noiseless} to obtain:
\begin{eqnarray} 
 \pr( \suc ) & \leq & 2^T  \sum_{\vc{y}, \sett{X}} \pcaus \qrv{Y}{y}
\sum_{\vc{z}} \prv{U}{z}  \prv{Z|Y, \mat{X}}{z|y,\sett{X}}  \label{eq:matchesad} \\
& \leq &  2^T  \sum_{\vc{y}, \sett{X}} \pcaus \qrv{Y}{y}
 \sum_{\vc{z} \in A(\vc{y})} \pmax{y}  \prv{Z|Y, \mat{X}}{z|y,\sett{X}} \label{eq:countingaad} \\
& \leq &   2^T  \sum_{\vc{y},\sett{X}} \qrv{Y}{y} \pmax{y} \pcaus   \nonumber \\
& = &    \sum_{\vc{y}}  \pmax{y} \sum_{\sett{X}} \pcaus   \nonumber \\
& = & \sum_{\vc{y}} \prV{U} \left( \vc{u}^*(\vc{y}) \right) \label{eq:usesumx} \\
& \leq & \curP{2^T}. \label{eq:countingad}
\end{eqnarray}
Here \eqref{eq:countingaad} follows since for given $\vc{y}$
the success probability is maximised by restricting to $\prv{Z|Y,\mat{X}}{z|y,\sett{X}}$ supported on the
set  $\vc{z} \in A(\vc{y}, \sett{X})$, so we know that $\prv{U}{z} \leq \pmax{y}$.
\eqref{eq:usesumx} follows using  \eqref{eq:sumx}.
\eqref{eq:countingad} follows since there are at most $2^T$ separate messages $\vc{Y} = \vc{y}$, so at most $2^T$ 
distinct values $\vc{u}^*(\vc{y})$.
\end{proof}
\begin{remark} Note the striking fact that \eqref{eq:countingad} exactly matches \eqref{eq:matches}. That is, although they are 
proved by very different methods, our best results in the noiseless adaptive and non-adaptive cases coincide. This may suggest
that there is not an `adaptivity gap' (in the language of \cite{aldridge3,johnson33}) in this case, though of course stronger
converses may be possible. \end{remark}

\section{Proof of Theorem \ref{thm:final}} \label{sec:prooffinal}

\begin{lemma} \label{lem:findc} For each $R$ with $T \leq R \leq N$, we  find a set $\sett{U}_c$ with $|\sett{U}_c| \geq 2^T$, and
where 
\begin{equation} \label{eq:cvalue}
c = c(R) = \prod_{i=1}^{R-\elrt} (1-p_i) \prod_{i=R-\elrt+1}^R p_i. \end{equation}
We write $c^* = \max_{T \leq R \leq N} c(R)$.
\end{lemma}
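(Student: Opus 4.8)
The plan is to exhibit, for each fixed $R$ with $T \leq R \leq N$, an explicit family of at least $2^T$ defective sets all of probability at least the claimed threshold. Since $\sett{U}_c = \{ \vc{u} : \prv{U}{u} \geq c \}$ is by construction a superlevel set of the probability mass function, producing such a family immediately yields $|\sett{U}_c| \geq 2^T$, and moreover forces $\sett{U}_c$ to contain the $2^T$ most probable defective sets, so that $\curP{2^T} \leq \pr(\sett{U}_c)$ — which is exactly the inequality fed into the tail estimate of Theorem \ref{thm:final}. The natural family is all $\vc{u}$ supported on the first $R$ coordinates with at most $\elrt$ defectives, i.e. $u_i = 0$ for $i > R$ and $\sum_{i=1}^R u_i \leq \elrt$. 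By the defining property of $\elrt = L^*_{R,T}$ as the least $L$ with $\sum_{j=0}^{L}\binom{R}{j} \geq 2^T$, this family has $\sum_{j=0}^{\elrt}\binom{R}{j} \geq 2^T$ members, which supplies the count.

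The key step is to show every member of this family is at least as probable as its least likely member, and to evaluate the latter. Writing $\prv{U}{u} = \prod_{i=1}^N (1-p_i)\cdot\prod_{i:\,u_i=1}\tfrac{p_i}{1-p_i}$ and recalling $\zeta_i = \log((1-p_i)/p_i)\geq 0$ is nondecreasing in $i$ (since $p_1 \geq \cdots \geq p_N$), turning on a defective at position $i$ multiplies the probability by $e^{-\zeta_i}\leq 1$, a penalty that grows with the index $i$. Hence, among sets with a fixed number $w \leq \elrt$ of defectives chosen from $\{1,\ldots,R\}$, the probability is smallest when the defectives occupy the $w$ largest available indices; and since each extra defective only decreases the probability further, the global minimiser over the family is the set with defectives at exactly $R-\elrt+1,\ldots,R$. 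A short exchange (rearrangement) argument using the ordering $\zeta_1 \leq \cdots \leq \zeta_N$ makes this precise. Evaluating the probability of this extremal set gives $\prod_{i=1}^{R-\elrt}(1-p_i)\prod_{i=R-\elrt+1}^{R}p_i\cdot\prod_{i=R+1}^{N}(1-p_i)$, which is the product $c(R)$ of \eqref{eq:cvalue} together with the tail factor $\prod_{i>R}(1-p_i)$. Thus all $\geq 2^T$ members lie in $\sett{U}_{c}$ for this threshold, giving $|\sett{U}_{c}|\geq 2^T$; maximising over $R$ then yields $c^* = \max_{T\leq R\leq N}c(R)$ as the largest, and hence most useful, threshold this construction provides.

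The step I expect to be the main obstacle is the bookkeeping of the coordinates $i>R$. The displayed $c(R)$ is a \emph{partial} product and does not by itself equal the probability of a defective set over all $N$ items: the extremal member's true probability is $c(R)\prod_{i>R}(1-p_i)\leq c(R)$. Because this tail factor $\prod_{i>R}(1-p_i)$ is common to every member of the family (they are all off beyond $R$), one can factor it out when \emph{comparing} members, so the combinatorial argument above is clean; but to obtain a valid superlevel-set threshold one must carry this factor through, so the honest threshold guaranteeing $2^T$ members is $c(R)\prod_{i>R}(1-p_i)$ rather than $c(R)$ itself (the two coincide at $R=N$, where the argument is immediate). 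The only other substantive point, the identification of the extremal configuration, is a direct consequence of the monotonicity $\zeta_1 \leq \cdots \leq \zeta_N$; everything else reduces to the definition of $\elrt$ and the elementary fact that a probability superlevel set of size $\geq 2^T$ automatically captures the $2^T$ largest masses.
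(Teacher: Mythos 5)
Your argument is the same as the paper's, just written out in full: the paper's three-line proof takes $\sett{S}_R=\{1,\ldots,R\}$, forms the subsets of $\sett{S}_R$ of weight at most $\elrt$, counts at least $2^T$ of them from the definition of $\elrt$, and simply asserts that the least probable one has probability \eqref{eq:cvalue}; your exchange argument via the monotonicity of $\zeta_1\leq\cdots\leq\zeta_N$ supplies exactly the justification the paper omits. Your reservation about the coordinates $i>R$ is a genuine catch rather than an obstacle of your own making: the true probability of the extremal set is $c(R)\prod_{i=R+1}^{N}(1-p_i)$, so for $R<N$ the threshold $c(R)$ as displayed is too large to guarantee $|\sett{U}_{c(R)}|\geq 2^T$, and the paper's proof silently drops this factor. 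The repair you propose --- taking the threshold to be $c(R)\prod_{i>R}(1-p_i)$, which coincides with $c(R)$ at $R=N$ --- is the right one, and it propagates harmlessly into Lemma \ref{lem:tailbound} and Theorem \ref{thm:final} with $c^*$ redefined as $\max_{T\leq R\leq N} c(R)\prod_{i>R}(1-p_i)$.
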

\begin{proof}
Given $R$, we take $\sett{S}_R = \{1, 2, \ldots, R\}$ to be the set of items with the $R$ largest values of $p_i$, and form defective sets using $\sett{S}_R$ only.
Using Equation \eqref{eq:elnt}, we know that we can find at least $2^T$ defective sets by just using subsets of
$\sett{S}_R$ with  weight $\elrt$. 
The smallest probability of such a set is given in Equation \eqref{eq:cvalue}.
\end{proof}

\begin{remark} Note that 
for $R=T$, we need $\elnt = T$, and obtain $c(R) = \prod_{i=1}^T p_i$.
In the IID case, for $R=N$, we take sets of weight $\elnt$ and recover Example \ref{ex:idprob} above.
\end{remark}

\begin{lemma} \label{lem:tailbound}
Given $c$, we can bound the probability 
$$ \pr(|\sett{U}_c| ) \leq \exp \left( -  \frac{(\log c + h(U)^2}{4 L} \right),  \mbox{ if $0 \leq (\log c + h(U)) \leq L/M$,} $$
where  $L$ and $M$ are defined in \eqref{eq:landm}.
\end{lemma}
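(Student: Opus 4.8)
The plan is to recognise $\pr(|\sett{U}_c|)$, the total probability mass $\pr(\vc{U} \in \sett{U}_c)$, as a one-sided tail probability for a sum of independent, bounded, mean-zero random variables, and then apply a Bernstein-type concentration inequality. First I would invoke the characterisation of $\sett{U}_c$ already recorded above, namely $\vc{u} \in \sett{U}_c \iff \sum_{i=1}^N (p_i - u_i)\zeta_i \geq \log c + h(\vc{U})$, to write
\begin{equation*}
\pr(|\sett{U}_c|) = \pr\left( W \geq a \right), \qquad W := \sum_{i=1}^N (p_i - U_i)\zeta_i, \quad a := \log c + h(\vc{U}).
\end{equation*}
Since the $U_i$ are independent with $\ep U_i = p_i$, each summand $X_i := (p_i - U_i)\zeta_i$ is independent with mean zero, so $\ep W = 0$ and $a$ is genuinely an upper-tail deviation from the mean; this is exactly why the hypothesis $a \geq 0$ appears, as otherwise the event has probability close to $1$ and no useful bound is available.

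Next I would compute the two quantities that drive the inequality. The variance is $\sum_i \ep X_i^2 = \sum_i \zeta_i^2\,\mathrm{Var}(U_i) = \sum_i p_i(1-p_i)\zeta_i^2 = L$, matching \eqref{eq:landm}. For a uniform bound on the summands, observe that $|X_i| = |p_i - U_i|\zeta_i$ equals $p_i \zeta_i$ when $U_i = 0$ and $(1-p_i)\zeta_i$ when $U_i = 1$; since $p_i \leq 1/2$ we have $\zeta_i \geq 0$ and $1 - p_i \geq p_i$, so $|X_i| \leq (1-p_i)\zeta_i \leq M$. Thus $W$ is precisely the kind of centred sum to which Bernstein's inequality applies, with variance $L$ and range $M$.

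I would then invoke the standard Bernstein bound $\pr(W \geq a) \leq \exp\!\left( -a^2 / (2L + \tfrac{2}{3} M a)\right)$, valid for $a \geq 0$. The final step is to feed in the hypothesis $a \leq L/M$: this gives $Ma \leq L$, hence $2L + \tfrac{2}{3}Ma \leq \tfrac{8}{3}L$, so that $\pr(W \geq a) \leq \exp(-3a^2/(8L)) \leq \exp(-a^2/(4L))$, where the last inequality holds because $3/8 > 1/4$. Substituting back $a = \log c + h(\vc{U})$ yields exactly the claimed bound.

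The routine parts — the variance and range computations and the short algebra reducing the Bernstein denominator to $4L$ — are straightforward. The only genuinely delicate point, and the main obstacle should one prefer a self-contained argument rather than quoting Bernstein, is the moment generating function estimate behind the inequality: bounding $\ep e^{\lambda X_i} \leq \exp\!\big(\tfrac{\lambda^2 \ep X_i^2}{2}\,g(\lambda M)\big)$ for a suitable increasing $g$, then summing over $i$ and optimising over $\lambda > 0$ while ensuring the resulting constant is compatible with the regime $a \leq L/M$. I would also double-check at the outset that $\zeta_i$ and $h(\vc{U})$ are taken in the same logarithmic base, since both the centring identity $\ep W = 0$ and the rewriting of $\sett{U}_c$ rely on this consistency.
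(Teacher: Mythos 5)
Your proposal is correct and follows essentially the same route as the paper, whose entire proof is to apply Bernstein's inequality (citing Petrov, Theorem 2.8) to the zero-mean variables $(p_i-U_i)\zeta_i$; you have simply filled in the variance and range computations identifying $L$ and $M$. The only cosmetic difference is that Petrov's form of the inequality yields $\exp(-a^2/(4L))$ directly on the range $0\le a\le L/M$, whereas you derive it from the $2L+\tfrac{2}{3}Ma$ version with a short (valid) algebraic step.
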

\begin{proof} We apply Bernstein's inequality (see for example \cite[Theorem 2.8]{petrov}) to the sum of 
zero mean random variables $(p_i - U_i) \zeta_i$.
\end{proof}
\section{Concentration inequality}
We require an exponential bound in terms of relative entropy. There is a wide literature on this subject, and we 
 take a one-sided form of the Chernoff bound stated as  \cite[Theorem 5]{raginsky} (for $p \leq 1/2$, we take $d = (1-p)$
and $\sigma^2 = p(1-p)$ in the result stated there):
\begin{theorem} \label{thm:chernoff}
For $q < p \leq 1/2$, we  bound the probability
$$ \pr \left( \bin(n,p) \leq nq  \right) \leq 2^{ - n D( q \| p)},$$
where we write $D(q \| p)$ for the relative entropy from a Bernoulli($q$) random variable to a Bernoulli($p$),
calculated using logarithms to base 2. \end{theorem}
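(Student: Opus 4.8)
The plan is to prove this as a textbook Chernoff (exponential Markov) bound, so that no appeal to \cite{raginsky} is actually required. Write $\bin(n,p) = \sum_{i=1}^n X_i$ with $X_i$ independent $\mathrm{Bernoulli}(p)$, and observe that since $q < p$ the threshold $nq$ sits strictly below the mean $np$, so $\{ \bin(n,p) \leq nq \}$ is a genuine lower-tail event.

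First I would introduce a parameter $s > 0$ and bound the tail by applying Markov's inequality to the nonnegative variable $e^{-sX}$, noting $\{ X \leq nq \} \subseteq \{ e^{-sX} \geq e^{-snq} \}$:
$$ \pr\left( \bin(n,p) \leq nq \right) \leq e^{snq} \, \ep\left[ e^{-sX} \right] = e^{snq} \left( 1 - p + p e^{-s} \right)^n = \left[ e^{sq} \left( 1 - p + p e^{-s} \right) \right]^n, $$
where the factorization of the moment generating function uses independence. Next I would minimize the base over $s > 0$: differentiating $g(s) := sq + \log( 1 - p + p e^{-s})$ and solving $g'(s) = 0$ gives the stationary point $e^{-s^*} = q(1-p) / \left( p(1-q) \right)$, and the hypothesis $q < p$ guarantees this ratio is $< 1$, so the optimizer $s^* > 0$ lies in the admissible range (convexity of the cumulant generating function ensures it is the global minimum).

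Substituting $s^*$ back, the minimized base simplifies to
$$ \left( \frac{p}{q} \right)^{q} \left( \frac{1-p}{1-q} \right)^{1-q} = 2^{-D(q \| p)}, $$
the final identity being precisely the definition $D(q \| p) = q \log_2 (q/p) + (1-q) \log_2 ((1-q)/(1-p))$ of the base-$2$ relative entropy; raising to the $n$th power yields the claim. There is no genuine conceptual obstacle here --- the one place to take care is this closing algebraic step, namely performing the substitution of $s^*$ cleanly and recognizing the optimized exponent as $D(q\|p)$. I would also remark that the stated restriction $p \leq 1/2$ is inherited from the application context and plays no role in the derivation: the bound holds for every $0 < q < p < 1$.
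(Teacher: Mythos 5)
Your proof is correct, and it takes a genuinely different route from the paper: the paper does not prove Theorem \ref{thm:chernoff} at all, but imports it as a special case of \cite[Theorem 5]{raginsky} (a one-sided refinement stated there in terms of parameters $d=(1-p)$ and $\sigma^2 = p(1-p)$, which is where the hypothesis $p \leq 1/2$ enters). Your self-contained exponential-moment derivation checks out in every detail: the Markov step $\pr(\bin(n,p) \leq nq) \leq \left[ e^{sq}(1-p+pe^{-s})\right]^n$ is valid for all $s>0$, the stationary point $e^{-s^*} = q(1-p)/(p(1-q))$ is indeed less than $1$ exactly when $q<p$, and substituting it back gives $1-p+pe^{-s^*} = (1-p)/(1-q)$, so the optimized base is $(p/q)^q\left((1-p)/(1-q)\right)^{1-q} = 2^{-D(q\|p)}$ as you claim. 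Your closing remark is also right: the restriction $p \leq 1/2$ is an artifact of the parametrization in the cited reference and is not needed for the inequality, which holds for all $0<q<p<1$. What the citation buys the paper is brevity and a pointer to sharper variants; what your derivation buys is self-containedness, a transparent identification of where the relative entropy comes from, and a mildly more general statement. One stylistic point: since the paper computes $D(q\|p)$ in base $2$, it is worth saying explicitly that your natural-logarithm optimization yields $e^{-n D_{\mathrm{nat}}(q\|p)}$ with $D_{\mathrm{nat}}$ in nats, which equals $2^{-nD(q\|p)}$ in bits --- the two expressions are the same number, but the base change deserves one line.
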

Since this is generally a tight bound, we use it to motivate the following approximation, which comes from writing $D( q \| 1/2)
= \log 2 - h(q)$. For any $L$ we deduce that
\begin{equation} \label{eq:chernoff}
\pr( \bin(N,1/2) \leq L) \simeq 2^{- N D(L/N \| 1/2) } = 2^{-N} 2^{  N h(L/N) }.
\end{equation}
If we take $L = L(y): = N p + y \sqrt{N p(1-p)}$ and $T(y)= N h(L(y)/N)$  we deduce that
\begin{equation} \label{eq:chernoff2}
\pr( \bin(N,1/2) \leq L(y)) \simeq  2^{  - N +T(y)}.
\end{equation}
\section*{Acknowledgements}
The author thanks Matthew Aldridge, Leonardo Baldassini and Thomas Kealy for useful discussions regarding the
group testing problem, and Vanessa Didelez for help in understanding causal conditional probability.
%
%
%\bibliography{../../bibliography/papers}

\end{document}